\newcommand{\showComments}{yes}
\newcommand{\submit}{no}            
\newcommand{\myNote}[2]{            
  \ifthenelse{\equal{\submit}{yes}}{}{%
    \ifthenelse{\equal{\showComments}{yes}}{\textcolor{#1}{#2}}{}
  }
}
\newcommand{\eat}[1]{}         
\definecolor{myblue}{rgb}{.8, .8, 1}
\newlength\mytemplen
\newsavebox\mytempbox
\newcommand\mybluebox{%
    \@ifnextchar[
       {\@mybluebox}%
       {\@mybluebox[0pt]}}
\def\@mybluebox[#1]{%
    \@ifnextchar[
       {\@@mybluebox[#1]}%
       {\@@mybluebox[#1][0pt]}}
\def\@@mybluebox[#1][#2]#3{
    \sbox\mytempbox{#3}%
    \mytemplen\ht\mytempbox
    \advance\mytemplen #1\relax
    \ht\mytempbox\mytemplen
    \mytemplen\dp\mytempbox
    \advance\mytemplen #2\relax
    \dp\mytempbox\mytemplen
    \colorbox{myblue}{\hspace{1em}\usebox{\mytempbox}\hspace{1em}}}
\newcommand{\LRU}{{\sc {\bf L}ru}}
\newcommand{\LFU}{{\sc {\bf L}fu}}
\newcommand{\LRFU}{{\sc {\bf L}rfu}}
\newcommand{\LIRS}{{\sc {\bf L}irs}}
\newcommand{\ARC}{{\sc {\bf A}rc}}
\newcommand{\CAR}{{\sc {\bf C}ar}}
\newcommand{\OPT}{{\sc {\bf O}pt}}
\newcommand{\CLOCK}{{\sc {\bf C}lock}}
\newcommand{\QQ}{{\sc {\bf 2}q}}
\newcommand{\Replace}{{\sc Replace}}
\definecolor{shadecolor}{gray}{0.875} 
\newtheorem{theorem}{Theorem}
\newtheorem{lemma}{Lemma}
\title{Analyzing Adaptive Cache Replacement Strategies}
\author[2]{Mario E. Consuegra}
\author[1]{Wendy A. Martinez}
\author[1]{Giri Narasimhan}
\author[1]{Raju Rangaswami}
\author[1]{Leo Shao}
\author[1]{Giuseppe Vietri}
\affil[1]{School of Computing and Information Sciences, 
Florida International University, Miami, FL 33199, USA.
\texttt{\{walem001,giri,raju,gviet001\}@fiu.edu}}
\affil[2]{Google Inc., Kirkland, WA, USA.}
\begin{document}

\maketitle

\begin{abstract}
%
%
  {\em Adaptive Replacement Cache} (\ARC) and {\em CLOCK with Adaptive
    Replacement} (\CAR) are state-of-the-art ``adaptive'' cache
  replacement algorithms invented to improve on the shortcomings of
  classical cache replacement policies such as \LRU, \LFU\ and
  \CLOCK. 
  By separating out items that have been accessed only once and items
  that have been accessed more frequently, both \ARC\ and \CAR\ are
  able to control the harmful effect of single-access items flooding
  the cache and pushing out more frequently accessed items.
  Both \ARC\ and \CAR\ have been shown to outperform their
  classical and popular counterparts in practice. 
  Both algorithms are complex, yet popular. 
  Even though they can be treated as online algorithms with an
  ``adaptive'' twist, a theoretical proof of the competitiveness of
  \ARC\ and \CAR\ remained unsolved for over a decade.
  We show that the competitiveness ratio of \CAR\ (and \ARC) has a
  lower bound of $N+1$ (where $N$ is the size of the cache) and an
  upper bound of $18N$ ($4N$ for \ARC).
  If the size of cache offered to \ARC\ or \CAR\ is larger than the
  one provided to \OPT, then we show improved competitiveness ratios.
  The important implication of the above results are that no
  ``pathological'' worst-case request sequences exist that could
  deteriorate the performance of \ARC\ and \CAR\ by more than a
  constant factor as compared to \LRU.
\end{abstract}

\section{Introduction}

Megiddo and Modha \cite{mm-ARC-03,mm-ARC-04} engineered an amazing
cache replacement algorithm that was self-tuning and called it {\em
  Adaptive Replacement Cache} or \ARC. Later, Bansal and Modha
\cite{bm-CAR-04} designed another algorithm called {\em \CLOCK\ with
  Adaptive Replacement} (\CAR). Extensive experimentation suggested
that \ARC\ and \CAR\ showed substantial improvements over previously
known cache replacement algorithms, including the well-known {\em
  Least Recently Used} or \LRU\ and \CLOCK. On the theoretical side,
the seminal work of Sleator and Tarjan \cite{st-aelupr-85} showed that
\LRU\ can be analyzed using the theory of online algorithms.
They showed that \LRU\ has a competitiveness ratio of $N$ (where $N$
is the size of the cache). More surprisingly, they also showed that
with no prefetching, no online algorithm for cache replacement could
achieve a competitiveness ratio less than $N$, suggesting that under
this measure, \LRU\ is optimal.
In other words, there exist worst-case request sequences that would
prevent any algorithm from being better than $N$-competitive. 
While these results are significant, they highlight the difference
between theory and practice. Sleator and Tarjan's techniques analyze
online algorithms in terms of their {\em worst-case} behavior (i.e.,
over all possible inputs), which means that other algorithms with
poorer competitiveness ratios could perform better in practice.
Another way to state this is that the results assume an {\em oblivious
  adversary} who designs inputs for the online algorithms in a way
that make them perform as poorly as possible. The upper bound on
performance ratio merely guarantees that no surprises are in store,
i.e., there is no input designed by an adversary that can make the
algorithm perform poorly.

Given a fixed size cache, the {\bf cache replacement problem} is that
of deciding which data item to evict from the cache in order to make
room for a newly requested data item with the objective of maximizing
cache hits in the future. The cache replacement problem has been
referred to as a {\em fundamental and practically important online
  problem in computer science} (see Irani and Karlin \cite{h-aanp-97},
Chapter 13) and a ``fundamental metaphor in modern computing''
\cite{mm-ARC-04}. 
%

The \LRU\ algorithm was considered the most optimal page replacement
policy for a long time, but it had the drawback of not being
``scan-resistant'', i.e., items used only once could pollute the cache
and diminish its performance. Furthermore, \LRU\ is difficult to
implement efficiently, since moving an accessed item to the front of
the queue is an expensive operation, first requiring locating the
item, and then requiring data moves that could lead to unacceptable
cache contention if it is to be implemented consistently and
correctly. The \CLOCK\ algorithm was invented by Frank Corbat\'o in
1968 as an efficient one-bit approximation to \LRU\ with minimum
overhead \cite{c-pems-68} and continues to be used in MVS, Unix,
Linux, and Windows operating systems \cite{f-wntprp-99}.  Like \LRU,
\CLOCK\ is also not scan-resistant because it puts too much emphasis
on ``recency'' of access and pays no attention to ``frequency'' of
access.  So there are sequences in which many other algorithms can
have significantly less cost than the theoretically optimal
\LRU. Since then, many other cache replacement strategies have been
developed and have been showed to be better than \LRU\ in
practice. These are discussed below in Section \ref{sec:bg}.

An important development in this area was the invention of {\bf
  adaptive algorithms}. While regular ``online'' algorithms are
usually designed to respond to input requests in an optimal manner,
these {\em self-tuning} algorithms are capable of adapting to changes
in the request sequence caused by changes in the workloads. Megiddo
and Modha's \ARC\ \cite{mm-ARC-03} is a self-tuning algorithm that is
a hybrid of \LFU\ and \LRU. Bansal and Modha's \CAR\ is an
adaptivehybrid of \LFU\ and \CLOCK\ \cite{bm-CAR-04}. Experiments show
that \ARC\ and \CAR\ outperform \LRU\ and \CLOCK\ for many benchmark
data sets \cite{bm-CAR-04}. Versions of \ARC\ have been deployed in
commercial systems such as the IBM DS6000/DS8000, Sun Microsystems's
ZFS, and in PostgreSQL.

Unfortunately, no {\bf theoretical analysis} of the adaptive
algorithms, \ARC\ and \CAR, exist in the literature. The main open
question that remained unanswered was whether or not there exist
``pathological'' request sequences that could force \ARC\ or \CAR\ to
perform poorly.
In this document we show that these two algorithms are
$O(N)$-competitive, suggesting that they are not much worse than the
optimal \LRU. We also prove a surprising lower bound on the
competitiveness that is larger than $N$.

The main contributions of this paper are as follows:
\begin{enumerate}
\item For completeness, we provide proofs that \LRU\ and \CLOCK\ are
  $N$-competitive.
\item We prove a lower bound on the competitiveness of \ARC\ and \CAR\
  of $N+1$, proving that there are request sequences where they cannot
  outperform \LRU\ and \CLOCK.
\item We show that \ARC\ is $4N$-competitive.
\item We show that \CAR\ is $18N$-competitive.
\item We obtain precise bounds for the competitiveness of all four
  algorithms when the sizes of the caches maintained by them are
  different from that maintained by \OPT.
\item We show that if the size of the cache is twice that of the one
  allowed for the optimal offline algorithm, then the competitiveness
  ratio drops to a small constant. 
\end{enumerate}
%
%
We use the method of potential functions to analyze the
algorithms. However, the main challenges in solving these problems is
that of carefully designing the potential function for the
analysis. We discuss the role of the adaptive parameter in the
potential function. The contributions of this paper are summarized in
Table \ref{tab:res}. In this table, $N$ is the size of the cache
maintained by the algorithm, while $N_O$ is the size of the cache
maintained by \OPT. The table provides lower bounds (LB) and upper
bounds (UB) on the competitiveness ratio when the cache sizes are
equal, i.e., $N = N_O$; it also provides upper bounds when they are
not equal.

\begin{table}[htb]
  \centering
  \begin{tabular}[htb]{|c|c|c|c|c|c|} 
    \hline
  Algorithm & Compet. Ratio & Compet. Ratio & Compet. Ratio UB & [Ref] \\
   & LB & UB & w/ Unequal Sizes & \\
  \hline\hline
  \LRU & $N$ & $N$ & $N/(N-N_O+1)$ & \cite{st-aelupr-85} \\ \hline
  \ARC & $N+1$ & $4N$ & $12N/(N-N_O+1)$ & This paper \\ \hline
  \CLOCK & $N$ & $N$ & $N/(N-N_O+1)$ & This paper\\ \hline
  \CAR & $N+1$ & $18N$ & $18N/(N-N_O+1)$ & This paper \\ \hline
\end{tabular}

\caption{Summary of Results}
\label{tab:res}
\end{table}
After providing relevant background on cache replacement algorithms in
Section \ref{sec:bg}, we discuss the lower bounds on the
competitiveness ratios of \ARC\ and \CAR\ in Section
\ref{sec:lb}. Next we prove upper bounds on the competitiveness ratios
of \LRU, \CLOCK, \ARC, and \CAR\ in Section \ref{sec:ub}. Concluding
remarks can be found in Section \ref{sec:concl}.

\section{Previous Work on Cache Replacement Strategies}
\label{sec:bg}

Below we give brief descriptions of the four algorithms being
discussed in this paper, after which we mention a large collection of
other closely related cache replacement algorithms.

\paragraph*{The \LRU\ Algorithm: }
\LRU\ evicts the least recently used entry. It tends to perform well
when there are many items that are requested more than once in a
relatively short period of time, and performs poorly on
``scans''. \LRU\ is expensive to implement because it requires a queue
with move-to-front operations whenever a page is requested. 
%

\paragraph*{The \CLOCK\ Algorithm: }
On the other hand, \CLOCK\ was designed as an efficient approximation
of \LRU, which it achieves by avoiding the move-to-front
operation. \CLOCK's cache is organized as a single ``circular'' list,
instead of a queue. The algorithm maintains a pointer to the ``head''
of the list. The item immediately counterclockwise to it is the
``tail'' of the list. Each item is associated with a ``mark''
bit. Some of the pages in the cache are marked, and the rest are
unmarked. When a page hit occurs that page is marked, but the contents
of the cache remain unchanged. When a page fault occurs, in order to
make room for the requested page, the head page is evicted if the page
is unmarked. If the head page is marked, the page is unmarked and the
head is moved forward clockwise, making the previous head as the tail
of the list. After a page is evicted, the requested page is unmarked
and placed at the tail of the list. \CLOCK\ is inexpensive to
implement, but is not scan-resistant like \LRU. 

\paragraph*{The  \ARC\ Algorithm}
To facilitate our discussion, we briefly describe the \ARC\
algorithm. As mentioned before, it combines ideas of recency and
frequency. \ARC's cache is organized into a ``main'' part (of size
$N$) and a ``history'' part (of size $N$). The main part is further
divided into two lists, $T_1$ and $T_2$, both maintained as LRU lists
(i.e., sorted by ``recency''). $T_1$ focuses on ``recency'' because it
contains pages with short-term utility. Consequently, when an item is
accessed for the first time from the disk, it is brought into
$T_1$. Items ``graduate'' to $T_2$ when they are accessed more than
once. Thus, $T_2$ deals with ``frequency'' and stores items with
potential for long-term utility. Additionally, \ARC\ maintains a
history of $N$ more items, consisting of $B_1$, i.e., items that have
been recently deleted from $T_1$, and $B_2$, i.e., items that have
been recently deleted from $T_2$. History lists are also organized in
the order of recency of access. The unique feature of \ARC\ is its
self-tuning capability, which makes it scan-resistant. Based on a
self-tuning parameter, $p$, the size of $T_1$ may grow or shrink
relative to the size of $T_2$. The details of the algorithm are fairly
complex and non-intuitive. Detailed pseudocode for \ARC\ (Figure 4
from \cite{mm-ARC-03}) is provided in the Appendix for convenience.

It is worth noting that \ARC\ is considered a ``universal'' algorithm
in the sense that it does not use any {\em a priori} knowledge of its
input, nor does it do any offline tuning. Furthermore, \ARC\ is
continuously adapting, since adaptation can potentially happen at
every step.

It must be noted that our results on \ARC\ assume the ``learning
rate'', $\delta$, to be equal to 1, while the \ARC\ algorithm as
presented by Megiddo and Modha recommended a ``faster'' learning rate
based on experiments on real data. The learning rate is the rate at
which the adaptive parameter $p$ is changed as and when needed.

\paragraph*{The  \CAR\ Algorithm}
%
Inspired by \ARC, \CAR's cache is organized into two main lists, $T_1$
and $T_2$, and two history lists, $B_1$ and $B_2$. Inspired by \CLOCK,
both $T_1$ and $T_2$ are organized as ``circular'' lists, with each
item associated with a mark bit. The history lists, $B_1$ and $B_2$
are maintained as simple FIFO lists. We let $t_1, t_2, b_1, b_2$
denote the sizes of $T_1, T_2, B_1, B_2$, respectively. Also, let
$t := t_1 + t_2$. Let lists $L_1$ (and $L_2$, resp.) be the list of
size $\ell_1$ ($\ell_2$, resp.) obtained by concatenating list $B_1$
to the end of ``linearized'' $T_1$ (concatenating $B_2$ to the tail of
$T_2$, resp.). Note that circular lists are linearized from head to
tail. We let $T_1^0$ and $T_2^0$ ($T_1^1$ and $T_2^1$, resp.)  denote
the sequence of unmarked (marked, resp.) pages in $T_1$ and $T_2$,
respectively.

The following invariants are maintained by \CAR\ for the lists: 
\begin{enumerate}
\item $0 \leq t_1 + t_2 \leq N$
\item $0 \leq \ell_1 = t_1 + b_1 \leq N$
\item $0 \leq \ell_1 + \ell_2 = t_1 + t_2 + b_1 + b_2 \leq 2N$
\item $t_1 + t_2 < N \implies b_1 + b_2 = 0$
\item $t_1 + t_2 + b_1 + b_2 \geq N \implies t_1 + t_2 = N$
\item Once  $t_1 + t_2 = N$ and/or $\ell_1 + \ell_2 = 2N$, they remain true
from that point onwards.
\end{enumerate}

\CAR\ maintains an adaptive parameter $p$, which it uses as a target
for $t_1$, the size of list $T_1$. Consequently, $N-p$ is the target
for $t_2$. Using this guiding principle, it decides whether to evict
an item from $T_1$ or $T_2$ in the event that a miss requires one of
the pages to be replaced. The replacement policy can be summarized
into two main points: 


\begin{enumerate} 
\item If the number of items in $T_1$ (barring the marked items at the
  head of the list) exceeds the target size, $p$, then evict an
  unmarked page from $T_1$, else evict an unmarked page from $T_2$.

\item If $\ell_1 = t_1 + b_1 = N$, then evict a history page from $B_1$,
  else evict a history page from $B_2$. Since the details of the
  algorithm are complex, the actual pseudocode is provided (Figure 2
  from \cite{bm-CAR-04}) in the Appendix.
\end{enumerate}

\paragraph*{Other Cache Replacement Algorithms}
The {\sc {\bf D}uelingClock} algorithm \cite{jipp-dcacr-10} is like
\CLOCK\ but keeps the clock hand at the newest page rather than the
oldest one, which allows it to be scan-resistant. More recent
algorithms try to improve over \LRU\ by implementing multiple cache
levels and leveraging history.  In \cite{oow-LRUK-93} the \LRU-$K$
algorithm was introduced. Briefly, the \LRU-$K$ algorithm estimates
interarrival times from observed requests, and favors retaining pages
with shorter interarrival times. Experiments have shown \LRU-$2$
performs better than \LRU, and that \LRU-$K$ does not show increase in
performance over \LRU-$2$ \cite{oow-LRUK-93}, but has a higher
implementation overhead. It was also argued that \LRU-$K$ is optimal
under the independence reference model (IRM) among all algorithms $A$
that have limited knowledge of the $K$ most recent references to a
page and no knowledge of the future \cite{oow-LRUK-93}.

In essence, the \LRU-$K$ algorithm tries to efficiently approximate
{\em Least Frequently Used} (\LFU) cache replacement algorithm. As $K$
becomes larger, it gets closer and closer to \LFU.
It has been argued that \LFU\ cannot adapt well to changing workloads
because it may replace currently ``hot'' blocks instead of ``cold''
blocks that had been ``hot'' in the past. \LFU\ is implemented as a
heap and takes $O(\log N)$ time per request.

Another cache replacement algorithm is \LIRS\ \cite{jz-LIRS-02}.  The
\LIRS\ algorithm evicts the page with the largest IRR (inter-reference
recency). It attempts to keep a small ($\approx 1\%$) portion of the
cache for HIR (high inter-reference) pages, and a large ($\approx
99\%$) portion of the cache for LIR (low inter-reference) pages.  The
\CLOCK-{\sc Pro} algorithm approximates \LIRS\ efficiently using
\CLOCK\ \cite{jcz-cpro-05}.
The \QQ\ \cite{JS94} algorithm is scan-resistant. It keeps a FIFO
buffer $A_1$ of pages that have been accessed once and a main \LRU\
buffer $A_m$ of pages accessed more than once. \QQ\ admits only hot
pages to the main buffer.  The buffer $A_1$ is divided into a main
component that keeps the pages in $A_1$ that still reside in cache,
and a history component that remembers pages that have been evicted
after one access.  The relative sizes of the main and history
components are tunable parameters. \QQ\ has time complexity of $O(1)$.
Another algorithm that tries to bridge the gap between recency and
frequency is \LRFU \cite{lcknmck-LRFU-01}. This is a hybrid of
\LRU\ and \LFU\ and is adaptive to changes in workload. The time
complexity ranges from $O(1)$ for \LRU\ to $O(\log n)$ for \LFU.

\section{Lower Bounds on Competitiveness Ratio for \ARC\ and \CAR} 
\label{sec:lb}

This section presents our results on the lower bounds for \ARC\ and
\CAR. We also show that the adaptive parameter is critical to both
\ARC\ and \CAR\ by showing that their non-adaptive versions have an
unbounded competitiveness ratio. 

\subsection{Lower Bound for \ARC}
\label{sec:lbarc}

First, we show a lower bound on the competitiveness ratio for \ARC.

\begin{theorem}
  \label{thm:arc1}
  The competitiveness ratio of Algorithm \ARC\ has a lower bound of
  $N+1$.
\end{theorem}

\begin{proof}
  We show that we can generate an unbounded request sequence that
  causes $N+1$ page faults on \ARC \: for every page fault on \OPT.
  The sequence only involves $2N+1$ pages denoted by $1, \ldots, 2N+1$.
  Our example, will take the contents of the cache managed by \ARC\
  from configurations 1 through configuration 5, which are shown in
  Table \ref{tab:lbarc}. Note that configuration 1 and configuration 5
  are essentially the same to the extent that the value of $p$ is 0 in
  both, and the number of pages in each of the four parts of the cache
  are identical. 

  \begin{table}[bht]
    \centering
    \begin{tabular}[b]{|c|c|c|c|c|c|}
      \hline
      Configuration & $p$ & $T_1$ & $T_2$ & $B_1$ & $B_2$ \\ 
      \hline \hline
      1 & 0 & $\emptyset$ & $1, \ldots, N$ & $\emptyset$ & $N+1,
                  \ldots, 2N$ \\ \hline
      2 & 0 & $2N+1$ & $2, \ldots, N$ & $\emptyset$ & $N+2,
                  \ldots, 2N, 1$ \\ \hline
      3 & 0 & $\emptyset$ & $2, \ldots, N, 1$ & $2N+1$ & $N+2,
                  \ldots, 2N$ \\ \hline
      4 & 1 & $\emptyset$ & $3, \ldots, N, 1, 2N+1$ & $\emptyset$ & $N+2,
                  \ldots, 2N, 2$ \\ \hline
      5 & 0 & $\emptyset$ & $1, 2N+1, 2, \ldots, N-1$ & $\emptyset$ & $N+2,
                  \ldots, 2N, N$ \\ \hline
    \end{tabular}
    \caption{Example for Lower Bound on \ARC's competitiveness}
    \label{tab:lbarc}
  \end{table}
  We note that we can obtain configuration 1 from an empty cache with
  the following request sequence:
  $2N, 2N, 2N-1, 2N-1, \ldots, 2, 2, 1, 1$.
  Consider the first half of the above request sequence, which
  contains a total of $4N$ requests to $2N$ new pages, each page
  requested twice in succession. The first time a page is requested
  from the first $N$ new pages, it will be put in $T_1$. The second
  time the page is requested, it will get moved to $T_2$. In the
  second half, if a page not in \ARC \: is requested, {\sc Replace}
  will be called, which will move a page from $T_2$ to $B_2$, and the
  new page will be placed in $T_1$. When the same page is requested
  again, it simply gets moved to $T_2$. The value of $p$ remains
  unchanged in this process. It is clear that we get Configuration 1
  as a result of the request sequence.

  We design our request sequence by following the steps below.

\begin{enumerate}
\item Make one request to a page $2N+1$ not in \ARC. We will assume
  that this is a brand new page and therefore also causes a fault for
  \OPT\ and for \ARC. The page $2N+1$ will go into $T_1$ and a page in
  $T_2$ will be demoted to $B_2$. The contents of \ARC \: is given by
  Configuration 2 in Table \ref{tab:lbarc}.
\item Request any page in $B_2$. This decreases the value of $p$ but
  since $p$ is zero it will remain unchanged. Since the size of $T_1$
  is more than $p$ \ARC \: will call {\sc Replace}, which will act on
  $T_1$, hence $2N+1$ will be demoted to $B_1$.  Upon requesting page
  1 in $B_2$, we get Configuration 3 in Table \ref{tab:lbarc}.
\item The next step is to request $2N+1$ again, which will move to
  $T_2$, $p$ is increased and a page in $T_2$ is demoted to
  $B_2$. Configuration 4 reflects the contents of the cache at this
  stage.
\item Finally we make $N-2$ requests to any pages from $B_2$. By
  requesting the pages $2,3,\ldots,N$, we end up in Configuration 5
  from Table \ref{tab:lbarc}.
\end{enumerate}

The steps outlined above cause $N+1$ page faults for \ARC \: and at
most one page fault for \OPT. Since we are back to the initial
configuration we can repeat this process over again. This concludes
the proof that the competitiveness ratio of \ARC \: is at least $N+1$.
\end{proof}

\subsection{Lower Bound for \CAR}

Now we prove a similar lower bound for \CAR.

\begin{theorem}
  \label{thm:carlb}
  The competitiveness ratio of Algorithm \CAR\ has a lower bound of
  $N+1$.
\end{theorem}

\begin{proof}
  We show that we can generate an infinite request sequence that
  causes $N+1$ page faults in \CAR\ for every page fault on \OPT.  The
  sequence only involves $2N+1$ pages denoted by $1, \ldots, 2N+1$.
  Our example, will take the contents of the cache managed by \CAR\
  from configurations 1 through $N+2$ as shown in Table \ref{tab:lbcar}.
  Note that a superscript of 1 on any page in $T_1 \cup T_2$ indicates
  that it is marked. All others are unmarked. Also note that
  configuration 1 and configuration $N+2$ are essentially the same
  upon relabeling.

  First, we show that configuration 1 is attainable, by showing that
  it can be reached from an empty cache. This is formalized in the
  following lemma. 
  \begin{lemma}
    We can obtain configuration 1 starting from an empty cache with the
    following request sequence:
    $2N, 2N, 2N-1, 2N-1, \ldots, 2, 2, 1, 1$. 
    \label{lem:carlb}
  \end{lemma}
  \begin{proof}
    The first half of the above request sequence calls each of the $N$
    pages $2N, 2N-1, \ldots, N+1$ twice in succession. The first time
    they are called, they are moved into $T_1$ unmarked. The second
    time the same page is called it gets marked, but remains in
    $T_1$. At the end of the first half, all the $N$ pages requested
    end up in $T_1$ and are all marked.

    The next call to new page $N$, will trigger a call to {\sc
      Replace}, which will move all the marked pages in $T_1$ to $T_2$
    leaving them unmarked. It will also move one page from $T_2$ to
    $B_2$. Finally, the requested page $N$ will be moved to $T_1$ and
    left unmarked. When requested again, it simply gets marked. When
    the next page, i.e., $N-1$ is requested, it moves marked page $N$
    to $T_2$, moves one more page from $T_2$ to $B_2$. As the rest of
    the pages from the request sequences are requested, the previous
    requested page gets moved to $T_2$, which in turn demotes one of
    its pages to $B_2$. At the end of the process, we get a marked
    page $1$ in $T_1$. Pages $2, \ldots, N$ are in $T_2$, unmarked,
    and pages $N+1, \ldots, 2N$ end up in $B_2$. This is exactly what
    we need for configuration 1.
  \end{proof}

  Continuing on the proof of Theorem \ref{thm:carlb}, we show what
  happens when, starting from configuration 1, \CAR\ processes the
  following request sequence.

  \begin{description}
  \item[{\bf Page $2N+1$:}] A page in $T_2$ is demoted to $B_2$, which
    loses a page; the marked page from $T_1$ is moved to $T_2$ and the
    new page is moved into $T_1$.
  \item[{\bf MRU page in $B_2$:}] This should have decremented $p$ but
    remains unchanged since it is already zero. Since the size of
    $T_1$ is more than $p$ \CAR\ will call {\sc Replace} and $2N+1$
    will be demoted to $B_1$, resulting in configuration $3$ in Table
    \ref{tab:lbcar}.
  \item[{\bf Page $2N+1$:}] It will move to $T_2$, $p$ is increased
    and a page in $T_2$ is demoted to $B_2$. See configuration $4$ in
    Table \ref{tab:lbcar}.
  \item[{\bf MRU page from $B_2$, repeat $N-2$ times:}] It results in
    configuration $N+2$ in Table \ref{tab:lbcar}.
  \end{description}

\begin{table}[bht]
  \centering
  \begin{tabular}[b]{|c|c|c|c|c|c|}
    \hline
    Config. & $p$ & $B_1$ & $T_1$ & $T_2$ & $B_2$ \\
    \hline \hline
    1 & 0  & $\emptyset$ & $1^1$ & $2, \ldots, N$  & $N+1, \ldots, 2N$ \\ \hline  
    2 & 0  & $\emptyset$ & $2N+1$ & $1, \ldots, N-1$  & $N, \ldots, 2N-1$ \\ \hline
    3 & 0  & $2N+1$ & $\emptyset$ & $N,1, \ldots, N-1$  & $N+1, \ldots, 2N-1$ \\ \hline  
    4 & 1  & $\emptyset$ & $\emptyset$ & $2N+1,N,1, \ldots, N-2$  & $N-1,N+1, \ldots, 2N-1$ \\ \hline    5 & 0  & $\emptyset$ & $\emptyset$ & $N-1,2N+1,N,1, \ldots, N-3$  & $N-2, N+1, \ldots, 2N-1$ \\ \hline 
    $\ldots$ & 0  & $\ldots$ & $\ldots$ & $\ldots$  & $\ldots$ \\ \hline 
    $N+2$ & 0  & $\emptyset$ & $\emptyset$ & $2, \ldots, N-1, 2N+1, N$  & $1,N+1, \ldots, 2N-1$ \\ \hline        
  \end{tabular}
  \caption{Example for Lower Bound on \CAR's competitiveness}
  \label{tab:lbcar}
\end{table}

  The request sequence detailed above generates $N+1$ faults for \CAR\ 
  while only $N$ different pages are requested. Thus, \OPT\ could
  limit itself to at most one fault in this stretch. \OPT\ will fault
  once during each stretch if the next page is picked to be one that
  is farthest used in the future. Repeating the above steps an
  unbounded number of times with appropriate relabeling proves that the
  competitiveness ratio of \CAR\ is lower bounded by $N+1$.
\end{proof}

\subsection{Non-Adaptive \ARC\ and \CAR\ are not Competitive}

It is particularly interesting to note that the non-adaptive version
of \CAR\ and \ARC\ (called {\em Fixed Replacement cache})
\cite{mm-ARC-03} are not competitive. The following two theorems prove
that the competitiveness ratios can be unbounded in this case. 

\begin{theorem}
\label{thm:fixp}
Algorithm \CAR\ with fixed $p$ is not competitive.
\end{theorem}
\begin{proof}
  Suppose that algorithm \CAR\ has $p$ fixed instead of being adaptive
  and $0 < p < N-1$. Recall that $p$ is the target size of $T_1$ and
  $N-p$ is the target size of $T_2$. We design a request sequence such
  that with less than $N$ pages we can generate an infinite number of
  page faults for \CAR. The sequence is described as follows:

  \begin{description}
  \item[{\bf Step 1:}] Fill up $T_2$ with $N-p$ unmarked pages as described
    above in the proof of Theorem \ref{thm:carlb}. 
  \item[{\bf Step 2:}] Request the MRU page in $B_2$. The requested page goes
    to the tail of $T_2$ as an unmarked page. Since the size of $T_2$
    is greater than $p$ we discard the head of $T_2$.
  \item[{\bf Step 3:}] Request the MRU page in $B_2$ which is actually
    the page discarded in Step 2 from $T_2$. This step is similar to
    Step 2 and we can continue to repeat this infinitely often, since
    the page that moves from $B_2$ to $T_2$ get's unmarked and the
    page that moves from $T_2$ to $B_2$ goes to MRU.
\end{description}
Therefore, we can cycle infinitely many times through $N-p +1$ pages
triggering an infinite number of faults, while \OPT\ can avoid faults
altogether during the cycle. 
\end{proof}

\begin{theorem}
  \label{thm:fixpARC}
  Algorithm \ARC\ with fixed $p$ is not competitive.
\end{theorem}
	
\begin{proof}

  Suppose that algorithm \ARC\ has $p$ fixed instead of being adaptive
  and $0 < p < N$. Recall that $p$ is the target size of $T_1$ and
  $N-p$ is the target size of $T_2$. We design a request sequence such
  that with less than $N$ pages we can generate an infinite number of
  page faults for \ARC. The first step is to fill up $T_2$ (size of
  $T_2 = N-p$). Next we request the MRU page in $B_2$. Every time we
  request a page from $B_2$, it goes into the top of $T_2$ and thus it
  increases the size of $T_2$ beyond its target size. It follows that
  \ARC \: will call \Replace\ and move a page from $T_2$ to the MRU
  position in $B_2$.  If the MRU page from $B_2$ is repeatedly
  requested, we will cycle through $N-p$ pages, every time incurring a
  page fault for \ARC, while \OPT\ can avoid faults altogether during
  the cycle.
\end{proof}

\section{Analyzing \LRU\ using potential functions}
\label{sec:ub}

\subsection{The generic approach}
The standard approach used here is as follows. First, we define a
carefully crafted potential function, $\Phi$.  As per the strategy of
analyzing competitiveness ratios suggested by Sleator and Tarjan
\cite{st-aelupr-85}, we then try to show the following inequality:
\begin{empheq}[box={\mybluebox[5pt]}]{equation}
  \label{eq:tpt}
  C_A + \Delta\Phi \le f(N) \cdot C_O + g(N),
\end{empheq}
where $C_A$ and $C_O$ are the costs incurred by the algorithm and by
\OPT, respectively, $\Delta\Phi$ is the change in potential, $f(N)$ is
some function of $N$, the size of the cache.

In all of our proofs, we assume that the work involves simultaneously
maintaining \OPT's cache as well as the algorithm's cache. So we can
break down the work into two steps, one where only \OPT\ serves and
one where only the algorithm serves. When only \OPT\ serves, there are
2 cases: first when \OPT\ has a hit and next when \OPT\ has a
miss. Next, we consider the cases when the algorithm serves, once when
it has a hit and once when it has a miss. In each case, our goal is to
prove the inequality (\ref{eq:tpt}) mentioned above, which establishes
that $f(N)$ is the competitiveness ratio of algorithm $A$. There may
be an additive term of $g(N)$ which is a function of the misses needed
to get to some initial configuration for the cache.

\subsection{Analyzing \LRU\ using potential functions}

Assuming that the size of cache given to the competing \OPT\ algorithm
is $N_O \le N$, the following result was proved by Sleator and Tarjan
\cite{st-aelupr-85} (Theorem 6) for \LRU.

\begin{theorem}
  \label{thm:lru}
  \cite{st-aelupr-85} Algorithm \LRU\ is $\big(\frac{N}{N-N_O+1}\big)$-competitive.
\end{theorem}

Here we present a complete proof of this well-known result because we
believe it is instructive for the other proofs in this paper. 

\begin{proof}
  While this was not used in the proof in Sleator and Tarjan
  \cite{st-aelupr-85}, a potential function that will facilitate the
  proof of the above theorem is:
  \begin{empheq}[box={\mybluebox[5pt]}]{equation}
    \label{eq:phiLRU2}
    \Phi = \frac{\sum_{x\in D} r(x)}{N_L-N_O+1},
  \end{empheq}
  where $D$ is the list of items in \LRU's cache but not in \OPT's
  cache, and $r(x)$ is the rank of item $x$ in \LRU's list with the
  understanding that the LRU item has rank 1, while the MRU item has
  rank equal to the size of the cache \cite{a-coa-96}.

  We now show the following inequality:
  \begin{empheq}[box={\mybluebox[5pt]}]{equation}
    \label{eq:tpt-lru}
    C_A + \Delta\Phi \le \big(\frac{N}{N-N_O+1}\big) \cdot C_O + O(N),
  \end{empheq}
  where $C_A$ and $C_O$ are the costs incurred by the algorithm and by
  \OPT, respectively, $\Delta\Phi$ is the change in potential, $f(N)$
  is some function of $N$, the size of the cache.

  We assume that the work involves simultaneously maintaining \OPT's
  cache as well as \LRU's cache. So we can break down the work of
  \LRU\ into two steps, one where only \OPT\ serves and one where only
  \LRU\ serves. When only \OPT\ serves, there are 2 cases: first when
  \OPT\ has a hit and next when \OPT\ has a miss. In either case, the
  cost for \LRU\ is 0, since only \OPT\ is serving. When \OPT\ has a
  hit, the cost for \OPT\ is also 0. Furthermore, since \LRU's cache
  remains untouched, and no changes take place in the contents of
  \OPT's cache, the ranks of the items in \LRU\ remain
  unchanged. Thus, $\Delta\Phi = 0$.  Therefore, the inequality in
  (\ref{eq:tpt-lru}) is trivially satisfied in this case.

  When \OPT\ has a miss, $C_O = 1$, as before. The item evicted by
  \OPT\ can contribute the rank of that item to increase at most by
  $N_L$, making the increase in potential function to be bounded by
  $\frac{N_L}{N_L-N_O+1}$.  Thus, the inequality in
  (\ref{eq:tpt-lru}) is satisfied.  

  Next, we consider the step where \LRU\ serves the request. As with
  \OPT, when \LRU\ is serving, the cost for \OPT\ is 0. We again
  consider two cases: first when \LRU\ has a hit and next when \LRU\
  has a miss. When \LRU\ has a hit, the cost for \LRU\ is 0. The
  contents of \LRU's cache may change. The item that was accessed is
  moved to the MRU position. However, this item is already in \OPT's
  cache and therefore cannot contribute to a change in
  potential. Several other items may move down in the cache, thus
  contributing to a decrease in potential of at most $(N-1)$. In the
  worst, case the increase in potential is at most 0. Therefore, the
  inequality in (\ref{eq:tpt-lru}) is again satisfied.  

  Finally, we consider the case when \LRU\ has a miss. As before,
  $C_L = 1$. Following the previous arguments, an item would be
  brought into MRU (which is already present in \OPT's cache), a bunch
  of items may be demoted in rank, and the \LRU\ item will be
  evicted. The only action that can contribute to an increase is
  caused by the item that is brought into the MRU location. However,
  this item is already present in \OPT's cache, and hence cannot
  contribute to an increase. All the demotions and eviction can only
  decrease the potential function. Note that before the missed item is
  brought into \LRU's cache, the contents of \LRU's and \OPT's cache
  agree in at most $N_O-1$ items, since \OPT\ just finished serving
  the request and the item that caused the miss is already in \OPT's
  cache. Thus there are at least $N_L - N_O +1$ items that contribute
  their ranks to the potential function. These items either get
  demoted in rank or get evicted. Either way, the potential function
  will reduce by a minimum value of $N_L - N_O +1$, although it could
  more if there are more items that are in \LRU\ and that are not in
  \OPT's cache. Thus the total change in potential has to be at most
  $N_L - N_O +1$, and we have
  \begin{eqnarray*}
    C_L + \Delta\Phi \le 1 - \frac{(N_L - N_O +1)}{(N_L - N_O +1)} \le 0 
    = \frac{N_L}{N_L-N_O+1} \cdot C_O.
  \end{eqnarray*}
  
  Summarizing the costs, we have the following:
  \begin{center}
    \begin{tabular}{ | c | c | c | c |}
      \hline
      Step & $C_L$  & $\Delta\Phi$ & $C_O$ \\ 
      \hline\hline
      \multicolumn{4}{|l|}{\bf \OPT\ Serves Request} \\
      \hline
      \OPT\ has a hit & 0 & 0 & 0 \\ \hline
      \OPT\ has a miss & 0 & $\le N_L$ & 1 \\ \hline\hline
      \multicolumn{4}{|l|}{\bf \LRU\ Serves Request} \\ \hline
      \LRU\ has a hit & 0 & $\le 0$ & 0 \\ \hline
      \LRU\ has a miss & 1 & $\le N_L - N_O +1$ & 0 \\ \hline
    \end{tabular}
  \end{center}
\end{proof}

The analysis of \LRU\ states that if the sizes of \LRU's and \OPT's
caches are $N_L$ and $N_O$ respectively, and if $N_L \ge N_O$, then
the competitiveness ratio of \LRU\ is $\frac{N_L}{N_L-N_O+1}$. Thus
\LRU\ is 2-competitive if the size of \LRU's cache is roughly twice
that of \OPT's cache.

\subsection{Analyzing the competitiveness of \CLOCK}

Our result on the competitiveness of \CLOCK\ is formalized in the
following theorem. While this result appears to be known, we have not
been able to locate a full proof and we believe this is of value. We
therefore present it for the sake of completeness.

\begin{theorem}
  \label{thm:clock}
  Algorithm \CLOCK\ is $\big(\frac{N}{N-N_O+1}\big)$-competitive.
\end{theorem}

\begin{proof}
  Let $M_0$ denote the subsequence of unmarked pages in \CLOCK,
  ordered counterclockwise from head to tail. Let $M_1$ denote the
  subsequence of marked pages in \CLOCK, ordered counterclockwise from
  head to tail. Let $q$ be any page in \CLOCK's cache. Let $P^0[q]$
  denote the position of an unmarked page $q$ in the ordered sequence
  $M_0$, and let $P^1[q]$ denote the position of a marked page $q$ in
  $M_1$. Finally, let $R[q]$ denote the rank of page $q$ defined as
  follows:
  \begin{empheq}[box={\mybluebox[5pt]}]{equation}
    R[q] = 
    \begin{cases} 
      P^0[q] & \text{if $q$ is unmarked,} \\
      P^1[q] + |M_0| &\text{otherwise.} 
    \end{cases}
  \end{empheq}
  Thus, if $q$ is an unmarked page at the head, then $R[q] = 1$. By
  the above definition, the following lemmas are obvious.

  \begin{lemma}
    \label{lem:CL-rankBound}
    If $q$ is any page in \CLOCK's cache, then $R[q] \leq N$.
  \end{lemma}
    
  \begin{lemma}
    \label{lem:CL-rankChange}
    If a marked page $q$ at the head of \CLOCK's cache is unmarked and
    moved to the tail, then $R[q]$ does not change in the process.
  \end{lemma}
    
  Let $D$ be the set of pages that are in the cache maintained by
  \CLOCK, but not in the cache maintained by \OPT. We define the
  potential function as follows:
  \begin{empheq}[box={\mybluebox[5pt]}]{equation}
    \label{eq:clockPhi} \Phi = \sum_{q\in D} R[q]
  \end{empheq}

  We prove one more useful lemma about the ranks as defined above. 

  \begin{lemma}
    \label{lem:CL-phiChange}
    If an unmarked page at the head of \CLOCK's cache is evicted from
    \CLOCK's cache, and if there is at least one page in $D$, then
    $\Phi$ decreases by at least 1 in the process. .
  \end{lemma}
    
  \begin{proof}
    All pages, marked or unmarked, will move down by at least one
    position (reducing the rank of each by at least 1). The decrease
    in potential for at least one page that is in $D$ will contribute
    to $\Phi$, guaranteeing that $\Delta\Phi \leq -1$.
  \end{proof}

  Let $C_{\CLOCK}$ and $C_{\OPT}$ be the costs incurred by the
  algorithms \CLOCK\ and \OPT, and let
  ${\cal S} = \sigma_1, \sigma_2, \ldots, \sigma_m$ be an arbitrary
  request sequence. Let ${\cal S}'$ denote the initial subsequence of
  requests that take place prior to the cache becoming full. Note that
  exactly $N$ faults are incurred in ${\cal S}'$, after which the
  cache remains full. Let ${\cal S}''$ be the subsequence of ${\cal S}$
  that comes after ${\cal S}'$.

  Let $C_{\CLOCK}$ and $C_{\OPT}$ be the cost incurred by the
  algorithms \CLOCK\ and \OPT\ respectively.  We will prove that for
  every individual request, $\sigma \in {\cal S}''$:
  \begin{equation} \label{eq:3cl}
    C_{\CLOCK}(\sigma) + \Delta \Phi \leq N *C_{\OPT}(\sigma)
  \end{equation}

  As before, we assume that request $\sigma$ is processed in two
  distinct steps: first when \OPT\ services the page request and, next
  when \CLOCK\ services the request. We will show that inequality
  (\ref{eq:3cl}) is satisfied for both the steps.

  When only \OPT\ acts in this step, $C_{clock} = 0$. If \OPT\ does
  not fault on this request, then $C_{OPT} = 0$. No change occurs to
  the contents of the cache maintained by \OPT\ as well as \CLOCK, and
  the clock hand does not move. Thus, $\Delta \Phi = 0$, satisfying
  inequality \ref{eq:3cl}. 

  If \OPT\ faults on request $\sigma$, then $C_{OPT} = 1$ and
  $C_{\CLOCK} = 0$. The contents of the cache maintained by \OPT\ does
  change, which could affect the potential function. The potential
  could increase due to the eviction of a page in \OPT. Since by Lemma
  \ref{lem:CL-rankBound} the rank of the evicted page cannot exceed $N$,
  the potential will change by at most $N$. Thus, inequality
  \ref{eq:3cl} is satisfied.

  Next we consider what happens when \CLOCK\ services the request. For
  this case $C_{OPT} = 0$. If \CLOCK\ does not fault, then
  $C_{clock} = 0$ and the requested page may change from an unmarked
  status to a marked one. However, since the page is already in the
  cache maintained by \OPT\ it is not in $D$ and is therefore not
  considered for the potential function calculations in
  \ref{eq:clockPhi}. Thus, inequality \ref{eq:3cl} is satisfied.

  Finally, we consider the case when \CLOCK\ faults, in which case
  $C_{\CLOCK} = 1$ and $C_{\OPT} = 0$. To satisfy inequality
  \ref{eq:3cl}, $\Delta\Phi$ needs to be less or equal to -1.  When
  \CLOCK\ has a miss, if the head page happens to be marked, then
  \CLOCK\ will repeatedly unmark the marked head page, moving it to
  the tail position, until an unmarked head page is encountered. The
  unmarked head page is then evicted. Each time a marked head page
  becomes an unmarked tail page, by Lemma \ref{lem:CL-rankChange} its
  rank does not change. When finally an unmarked head page is evicted,
  we know that there is at least one page in \OPT's cache that is not
  in \CLOCK's cache (i.e., the page that caused the fault). Since
  there are N pages in the cache maintained by \CLOCK, at least one of
  those pages is guaranteed not to be part of the cache maintained by
  \OPT. Since there is at least one page in $D$, by Lemma
  \ref{lem:CL-phiChange} it is clear that evicting an unmarked head page
  will decrease the potential function by at least one, which will
  pay for the \CLOCK's page fault.

  We have therefore showed that for every request $\sigma$, inequality
  \ref{eq:3cl} is satisfied. Since there can be at most $N$ faults for
  the requests in ${\cal S'}$, summing up the above inequality for all
  requests, $\sigma \in {\cal S}$, we get 
  \[ C_{\CLOCK}({\cal S}) \leq N *C_{\OPT}({\cal S}) +
    N.\] 
  This completes the proof of the theorem and the
  competitiveness analysis of the \CLOCK\ algorithm.
\end{proof} 


\subsection{Analyzing the Competitiveness of ARC}
\label{sec:anal}

In this paper, we prove two different upper bounds for the
competitiveness of \ARC. These two proofs use very different potential
function. The first one allows for the sizes of the caches maintained
by \ARC\ and \OPT\ to be different, while the second one does not
allow for it, but provides a tighter bound. We provide both results
below. 

Our first result on the competitiveness of \ARC\ is formalized in the
following theorem:

\begin{theorem}
  \label{thm:arc}
  Algorithm \ARC\ is $\big(\frac{12N}{N-N_O+1}\big)$-competitive.
\end{theorem}

\begin{proof}
  Let $P_X[q]$ be the position of page $q$ in an arbitrary ordered
  sequence of pages $X$. When the set is obvious, we will drop the
  subscript and denote $P_X[q]$ simply by $P[q]$. The set of history
  pages $T_1$, $T_2$, $B_1$, and $B_2$ will be treated as an ordered
  sequence of pages ordered from its LRU position to its MRU
  position. Let $\OPT$ and $\CAR$ be the set of main pages stored in
  the caches for algorithms \OPT\ and \ARC\ respectively.
  Let 
  $D = \ARC \setminus \OPT$.
  %
  %
  As before, we associate each page with a rank value $R[q]$, which is
  defined as follows:
  \begin{empheq}[box={\mybluebox[5pt]}]{equation}
    R[q] = 
    \begin{cases} 
      2P_{B_1}[q]  & \text{if } q \in B_1  \\
      2P_{B_2}[q]  & \text{if } q \in B_2  \\
      4 P_{T_1}[q] + 2b_1  &\text{if } q \in T_1 \\
      4 P_{T_2}[q] + 2b_2  &\text{if } q \in T_2
    \end{cases}
    \label{eq:R2}
  \end{empheq}
  Finally, we define the potential function as follows:
  \begin{empheq}[box={\mybluebox[5pt]}]{equation}
    \Phi = p+2t_1+ 2\big(\frac{\sum_{q \in D} R[q]}{N - N_O + 1}\big) - 3
    |\ARC| 
    \label{eq:Phi2}
  \end{empheq}
The initial value of $\Phi$ is 0. If the following inequality
(\ref{eq:3}) is true for any request $\sigma$, where $\Delta \Phi$
is the change in potential caused by serving the request, then when
summed over all requests, it proves Theorem \ref{thm:arc}.
\begin{equation}
  C_{\ARC}(\sigma) + \Delta \Phi \leq \frac{12 N C_{\OPT}(\sigma)}{N - N_O + 1}. 
  \label{eq:3}
\end{equation}
As before, we assume that request $\sigma$ is processed in two
distinct steps: first when \OPT \: serves and, next when \ARC \:
serves. We will show that inequality (\ref{eq:3}) is satisfied for
each of the two steps.

\subsubsection*{Step 1: \OPT \: serves request $\sigma$}

Since only \OPT \: acts in this step, $C_{\ARC} = 0$, and
$T_1 \cup T_2$ does not change. There are two possible cases: either
\OPT \: faults on $\sigma$ or it does not. If \OPT \: does not fault
on this request, then it is easy to see that $C_{\OPT} = 0$ and
$\Delta \Phi = 0$, thus satisfying inequality (\ref{eq:3}).

If \OPT \: faults on request $\sigma$, then $C_{\OPT} = 1$ and some
page $q$, is evicted from the cache maintained by \OPT\ will belong to
$D$ after this step and thus its rank will contribute to the potential
function, which will increase by two times the rank of $q$. The
maximal positive change in potential will occur when $q$ is the MRU
page of either $T_1$ or $T_2$. In this case the rank of $q$ is given
by: $R[q] =4P[q] + b_1$ ($R[q] =4P[q] + b_2$). The maximum possible
values for each of the terms $P[q]$ and $b_1$ will be $N$, hence the
maximum possible rank of $q$ will be $4N + 2N= 6N$. Therefore
resulting potential change is at most 
$\frac{12N}{N-N_O+1}$.

\subsubsection*{Step 2: \ARC \: serves request $\sigma$}

We break down the analysis into four cases. Case 2.1 deals with the
case when \ARC \: finds the page in its cache. The other three cases
assume that \ARC \: faults on this request because the item is not in
$T_1 \cup T_2$. Cases 2.2 and 2.3 assume that the missing page is
found recorded in the history in lists $B_1$ and $B_2$,
respectively. Case 2.4 assumes that the missing page is not recorded
in history.

\paragraph*{Case 2.1: \ARC \: has a page hit}

Clearly, the page was found in $T_1 \cup T_2$, and $C_{\ARC} = 0$. We
consider the change of each of terms in the potential function
individually.

\begin{enumerate}
\item As per the algorithm, $p$ can only change when the page is found
  in history. (See lines 3 through 10 of \ARC$(x)$.) Since the page is
  not found in \ARC's history, $\Delta p = 0$.
\item If the hit happens in $T_1$, the page will move to the top of
  $T_2$ (See line 2 of \ARC$(x)$.), which will result in a decrease in
  $t_1$. If the hit happens in $T_2$, the size of $t_1$ will remain
  the same. The overall change in $t_1$ will be 0.
\item Since \OPT \: has already served the page, the page is in \OPT's
  cache. Therefore, even if the page's rank could change when moved
  from $T_1$ to MRU position of $T_2$, this rank will not affect the
  potential since the page is not in $D$.
\end{enumerate}
We, therefore, conclude that $ \Delta \Phi = 0$, satisfying inequality
(\ref{eq:3}).

Next we will analyze the 3 cases when the requested page is not in
\ARC's cache. Since $C_{\ARC}= 1$, the change in potential must be
$\le -1$ in each case in order for inequality (\ref{eq:3}) to be
satisfied.

\paragraph*{Case 2.2: \ARC \: has a page miss and the missing page
  is in $B_1$}
We consider the two cases, first when \Replace\ moves an item from
$T_1$ to $B_1$ and second when it moves an item from $T_2$ to $B_2$.
\begin{enumerate} \topsep=0pt\itemsep=0pt
\item Case 1: We consider the change in potential function by
  analyzing each of the 3 terms.
\begin{itemize}
\item Value of $p$ will either increase by $1$ or stay the same in
  case $p=N$, we will account for the worst case which is when
  $\Delta p = 1$.
\item A new page is being added to MRU of $T_2$, and \Replace\ is
  taking the LRU page of $T_1$ to $B_1$, then $2\Delta t_1 = -2$.
\item The page that moved from $B_1$ to $T_2$ is not in $D$, therefore
  the change in its rank will not affect the potential, the other
  pages will could only decrease their rank, meaning that
  $2\Delta \sum_{q \in D} R[q] \leq 0$.
\end{itemize}
Since $p$ increases by at most 1 and $t_1$ decreases by at least 2 the
total change in potential is at most -1.
\item Case 2: Once again. we consider the change in potential function
  by analyzing each of the three terms.
\begin{itemize}
\item Value of $p$ will either increase by $1$ or stay the same in
  case $p=N$, we will account for the worst case which is when
  $\Delta p = 1$.
\item A new page is added to MRU of $T_2$, and \Replace\ moves the LRU
  page of $T_2$ to $B_2$. Thus, there is no change in $T_1$.
\item The page that moved from $B_1$ to $T_2$ is not in $D$, therefore
  the change in its rank will not affect the potential. Since
  $t_1 + t_2 = N$, it is guaranteed that at least $N-N_O + 1$ pages
  are not in \OPT. For the pages that are in $T_1$, their ranks will
  decrease by at least 2 since $b_1$ decreases by 1, and for the pages
  in $T_2$ their ranks will decrease by at least 2 as well since $b_2$
  increases by 1 but the LRU page in $T_2$ will move to $B_2$,
  reducing $P[q]$ for all the pages in $T_2$. The term
  $2\frac{\sum_{q \in D} R[q]}{N - N_O + 1}$ decreases by at least -4.
\end{itemize}
Since $p$ increases by at most 1 and
$2\frac{\sum_{q \in D} R[q]}{N - N_O + 1}$ decreases by at least -4
the total change in potential is at most -3.
\end{enumerate}

\paragraph*{Case 2.3: \ARC \: has a page miss and the missing page
  is in $B_2$}

When the missing page is in $B_2$, \ARC\ makes a call to {\sc Replace}
(Line 5) and then executes Lines 18-19. Thus, $p$ is decremented
except if it is already equal to 0.
We consider two sub cases: $\Delta p \leq -1$ and $\Delta p = 0$.
\paragraph*{\underline{$\Delta p \leq -1$}: }
As in Case 2.2, the call to {\sc Replace} has no effect on
$t_1$. \Replace\ will not increment the rank using a similar analysis
as in 2.2 and change in $p$ will at least be -1.  The change in the
potential function is at most -1.

\paragraph*{\underline{$\Delta p = 0$}: }
Unlike the sub case above when $p$ decreases by 1, the change in $p$
cannot guarantee the required reduction in the potential. We therefore
need a tighter argument. 
We know that there is a call to {\sc Replace}. Two cases arise and are
discussed below.
\begin{itemize} \topsep=0pt\itemsep=0pt
\item \Replace\ moves an item in $T_1$ to $B_1$: Since the LRU page of $T_1$ is
  moved to the MRU position of $B_1$, $2\Delta t_1 = -2$ and there is
  no movement of a page in $D$ that could increase the rank. Therefore
  the total change in the potential function is at most -2.

\item \Replace\ moves an item in $T_2$ to $B_2$: $p=0$ indicates that
  $T_2$ has $N$ pages, therefore is guarantee that at least $N-N_O+1$
  pages will not be part of \OPT, contributing to the change in
  potential. The page being moved from $T_2$ to $B_2$ will decrease
  it's rank by at least 2, and the rest of the pages in $T_2$ will
  move down one position ($P[q]$ will decrease by 1) while $B_2$ will
  remain the same, resulting in a change in the potential function of
  at most -4.
\end{itemize}
Thus, in each case the potential function decreased by at most -2.

\paragraph*{Case 2.4: \ARC \: has a page miss and the missing page
  is not in $B_1 \cup B_2$}

\begin{enumerate}

\item $t_1+b_1=N$; $t_1<N$; The LRU page in $B_1$ is evicted. Assume
  \Replace\ moves a page from $T_1$ to $B_1$ and a new page is brought
  into $T_1$ ($\Delta t_1 = 0$, $\Delta b_1 = 0$, $\Delta t_2 = 0$,
  $\Delta b_2 = 0$).

  \begin{itemize}
  \item The term $p$ is not affected.
  \item The term $t_1$ is not affected.
  \item Since $t_1+b_1 = N$, at least $N-N_o+1$ pages in
    $T_1 \cup B_1$ are not in $\OPT$. If the page is in
    $B_1 \setminus \OPT$ then its rank decreases by $2$; if the page
    is in $T_1 \setminus \OPT$ its rank decreases by $4$.
  \end{itemize}
    
\item $t_1+b_1=N$; $t_1<N$; The LRU page in $B_1$ is evicted. Assume
  \Replace\ moves a page from $T_2$ to $B_2$ and a new page is brought
  into $T_1$ ($\Delta t_1 = 1$, $\Delta b_1 = -1$, $\Delta t_2 = 1$,
  $\Delta b_2 = 1$).

  \begin{itemize}
  \item The term $p$ is not affected.
  \item The term $t_1$ is increased by $1$.
  \item Since $t_1+t_2 = N$, at least $N-N_o+1$ pages in
    $T_1 \cup T_2$ that are not in \OPT. If a page, $q$, is in
    $T_1 \setminus \OPT$ then its rank decreases by $2$
    ($\Delta R[q] = \Delta 4*P[q] + \Delta 2*b_2 = -2$); if the page,
    $q$, is in $T_2 \setminus \OPT$ its rank decreases by $2$
    ($\Delta R[q] = \Delta 4*P[q] + \Delta 2*b_2 = -2$).
  \end{itemize}

\item $t_1+b_1<N$; $t_1+t_2+b_1+b_2=2N$; Assume that the LRU page in
  $B_2$ is evicted and \Replace\ moves a page from $T_1$ to $B_1$ and
  a new page is brought into $T_1$ ($\Delta t_1 = 0$,
  $\Delta b_1 = 1$, $\Delta t_2 = 0$, $\Delta b_2 = -1$).

  \begin{itemize}
  \item The term $p$ is not affected.
  \item The term $t_1$ is not affected.
  \item Here we used the fact that $t_2+b_2 > N$, then at least
    $N-N_o+1$ pages in $T_2 \cup B_2$ are not in $\OPT$. If a page, $q$,
    is in $T_2 \setminus \OPT$ then its rank decreases by $2$
    ($\Delta R[q] = \Delta 4*P[q] + \Delta 2*b_2 = 4 *(0) + 2(-1) =
    -2$); if the page, $q$, is in $B_2 \setminus \OPT$ its rank
    decreases by $2$ ($\Delta R[q] = \Delta 2*P[q] = 2 * (-1)= -2$).
  \end{itemize}

\item $t_1 + b_1<N$; $t_1 + t_2+b_1 + b_2 = 2N$; Assume that the LRU
  page in $B_2$ is evicted and \Replace\ moves a page from $T_2$ to
  $B_2$ and a new page is brought into $T_1$ ($\Delta t_1 = 1$,
  $\Delta b_1 = 0$, $\Delta t_2 = 1$, $\Delta b_2 = 0$).

  \begin{itemize}
  \item The term $p$ is not affected.
  \item The term $t_1$ is increased by $1$.
  \item Here we used the fact that $t_2+b_2 > N$, then at least
    $N-N_o+1$ pages in $T_2 \cup B_2$ are not in $\OPT$. If a page, $q$,
    is in $T_2 \setminus \OPT$ then its rank decreases by $2$
    ($\Delta R[q] = \Delta 4*P[q] + \Delta 2*b_2 = 4 *(0) + 2(-1) =
    -2$); if the page, $q$, is in $B_2 \setminus \OPT$ its rank
    decreases by $2$ ($\Delta R[q] = \Delta 2*P[q] = 2 * (-1)= -2$).
  \end{itemize}


\item $t_1 + b_1<N$; $t_1 + t_2+b_1 + b_2 < 2N$; In this case, no
  pages are evicted from history. Assume that \Replace\ moves a page
  from $T_1$ to $B_1$ and a new page is brought into $T_1$
  ($\Delta t_1 = 0$, $\Delta b_1 = 1$, $\Delta t_2 = 0$,
  $\Delta b_2 = 0$)

  \begin{itemize}
  \item The term $p$ is not affected.
  \item The term $t_1$ is increased by $1$.
  \item Here we cannot say that the rank decreases. Hence the rank
    term is at most $0$.
  \item The term $|\ARC|$ increases by $1$.
  \end{itemize}

\item $t_1 + b_1<N$; $t_1 + t_2+b_1 + b_2 < 2N$; In this case, no
  pages are evicted from history. Assume \Replace\ moves a page from
  $T_2$ to $B_2$ and a new page is brought into $T_1$
  ($\Delta t_1 = 1$, $\Delta b_1 = 0$, $\Delta t_2 = -1$,
  $\Delta b_2 = 1$)

  \begin{itemize}
  \item The term $p$ is not affected.
  \item The term $t_1$ is not affected.
  \item Here we cannot say that the rank decreases. Hence the rank term is at most $0$.
  \item The term $|\ARC|$ increases by $1$.
  \end{itemize}
  
\end{enumerate}


\paragraph*{Wrapping up the proof of Theorem \ref{thm:arc}: }
Combining the four cases (2.1 through 2.4) proves that inequality
(\ref{eq:3}) is satisfied when \ARC\ serves request $\sigma$. This
completes the proof of Theorem \ref{thm:arc}, establishing that the
upper bound on the competitiveness of \ARC\ is $12N$ for the cases
where the sizes of \OPT\ and \ARC\ are the same. By analyzing cases
where the size of \ARC\ is greater than \OPT\ we can observe that
since \ARC\ will be $\frac{12N}{N - N_O + 1}$ the greater the size of
\ARC's cache relative to the size of \OPT's cache, smaller will be the
competitiveness of \ARC.
\end{proof} 

\subsection{Alternative Analysis of Competitiveness of \ARC}
\label{sec:altarc}

Below, we prove an improved upper bound on the competitiveness ratio
of \ARC. As seen below, the potential function is considerably
different. Let $C_A$ and $C_O$ be the costs incurred by the algorithms
\ARC\ and \OPT. 

We start with some notation and definitions. If $X$ is the set of
pages in a cache, then let $MRU(X)$ and $LRU(X)$ be the most recently
and least recently used pages from $X$. Let $MRU_k(X)$ and $LRU_k(X)$
be the $k$ most recently and $k$ least recently used pages from $X$.

Let lists $L_1$ (and $L_2$) be the lists obtained by concatenating
lists $T_1$ and $B_1$ ($T_2$ and $B_2$, resp.). Let list $L$ be
obtained by concatenating lists $L_1$ and $L_2$. We let
$\ell_1, \ell_2, t_1, t_2, b_1, b_2$ denote the sizes of
$L_1, L_2, T_1, T_2, B_1, B_2$, respectively. Finally, let
$t := t_1 + t_2$ and $\ell := \ell_1 + \ell_2$.

At any instant of time during the parallel simulation of \OPT\ and
\ARC, and for any list $X$, we let $MRU_k(X)$ be denoted by $TOP(X)$,
where $k$ is the largest integer such that all pages of $MRU_k(X)$ are
also in the cache maintained by OPT. We let $L_1', L_2', T_1', T_2'$
denote the $TOP$s of $L_1, L_2, T_1, T_2$, respectively, with sizes
$\ell_1', \ell_2', t_1', t_2'$, respectively. We let $b_1'$ and $b_2'$
denote the sizes of the $B_1' = L_1' \cap B_1$ and
$B_2' = L_2' \cap B_2$, respectively. Note that if $b_1'> 0$
($b_2' > 0$, resp.), then all of $T_1$ ($T_2$, resp.) is in
\OPT. Finally, we let $\ell' := \ell_1' + \ell_2'$.
The \ARC\ algorithm ensures that $0 \le t \le N$, $0 \le \ell \le 2N$
and $0 \le \ell_1 \le N$, thus making $0 \le \ell_2 \le 2N$.

We assume that algorithm $X$ being analyzed is provided an arbitrary
request sequence $\sigma = \sigma_1, \sigma_2, \ldots,
\sigma_m$. 
We define the potential function as follows:
\begin{empheq}[box={\mybluebox[5pt]}]{equation}
  \label{eq:1}
    \Phi = p - (b_1' + 2 \cdot t_1' + 3 \cdot b_2' + 4 \cdot t_2').
\end{empheq}

\noindent
The main result of this section is the following theorem:
\begin{theorem}
  \label{thm:1}
  Algorithm ARC is $4N$-competitive. 
\end{theorem}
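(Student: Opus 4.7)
The plan is to use standard amortized analysis with the given potential $\Phi$. First observe that each of $p$, $t_1'$, $t_2'$, $b_1'$, $b_2'$ lies in $[0,N]$ and $t \le N$, so $|\Phi| = O(N)$ and the telescoping sum of amortized costs equals the actual cost of \ARC\ up to an additive $O(N)$. It then suffices to show that on every request the amortized cost $c + \Delta\Phi$ is at most a small constant, and that the total amortized cost over an entire phase $P(i)$ (in which \ARC\ faults exactly $N$ times) is at most $4N$ times \OPT's cost on $\sigma(i)$, plus $O(1)$. Following the strategy of Albers \cite{a-oas-03}, the phase partition is used to force \OPT\ to fault at least once per phase, after which the $4N$ factor follows by summing over all phases.

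The core of the argument is a case analysis on the action \ARC\ takes on a request $\sigma_j$: (i) a hit on $T_1$, promoting the page into $T_2$; (ii) a hit on $T_2$, which only reorders $T_2$; (iii) a miss with the page present in $B_1$, which increments $p$, brings the page into $T_2$, and triggers the replacement subroutine; (iv) a miss with the page in $B_2$, which decrements $p$ and behaves symmetrically; and (v) a full miss that evicts from $T_1$ or $T_2$ according to the comparison of $|T_1|$ with $p$, and possibly demotes a page into history. For each case I would determine how each primed quantity $t_1', t_2', b_1', b_2'$ shifts as a consequence of the access and the induced eviction, and verify that the weighted combination in $\Phi$ drops by at least the actual cost unless \OPT\ itself faults on $\sigma_j$. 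The key structural fact to invoke is that when \OPT\ does not fault, the accessed page is already in \OPT's cache, so placing it at the MRU of $T_1$ or $T_2$ either enlarges or preserves the corresponding $TOP$, which is precisely what drives $\Phi$ downward by enough to pay for \ARC's fault.

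The main obstacle I anticipate is handling the adaptive parameter $p$ cleanly. The $+p$ term in $\Phi$ is designed to offset the unwanted motion of the primed quantities during history hits, but the \ARC\ replacement subroutine branches on whether $|T_1| \ge p$, so the target of the eviction (and hence which primed quantity shifts) depends on $p$ itself. Enumerating the cross-product of (which list the eviction comes from) and (whether the evicted page was also in \OPT's cache) produces a dense table of subcases, each of which must be verified individually against the specific coefficients $1, 2, 3, 4$ appearing in $\Phi$. A secondary concern is bounding the drift of $\Phi$ across phase boundaries so that the telescoping error stays $O(N)$, and checking that phases in which \OPT\ happens to fault many times still respect the per-request amortized bound. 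Getting the integer coefficients in $\Phi$ to balance simultaneously against all subcases is where the design of the potential is doing the real work, and I expect the $B_1$-hit and $B_2$-hit branches, which simultaneously shift $p$, move a page between lists, and trigger an eviction, to be the most delicate bookkeeping.
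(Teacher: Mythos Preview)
Your proposal is largely on the paper's track: the same potential, the same case split on where the requested page lies, and the same core mechanism that when \OPT\ does not fault the accessed page is already in \OPT's cache, so the relevant $TOP$ grows and $\Phi$ drops enough to pay for \ARC's fault. The subcase bookkeeping you anticipate (which list {\sc Replace} evicts from, whether the evicted page was also in \OPT) is exactly what the paper carries out, supported by a few structural lemmas (e.g., after the cache is full an eviction of $LRU(L_1)$ or $LRU(L_2)$ never touches $L_1'$ or $L_2'$, and {\sc Replace} by itself changes $\Phi$ by at most $+1$).

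There is, however, one genuine confusion in your framing that would derail the proof if you followed it literally. You write that the phase partition is used ``to force \OPT\ to fault at least once per phase,'' and then deduce the $4N$ ratio from that. For \ARC\ this premise is \emph{false}: unlike \LRU, \ARC\ can fault $N$ times on fewer than $N$ distinct pages, because a page can be demoted into $B_1$ or $B_2$ and then missed again within the same phase. The paper flags this explicitly and does \emph{not} use any ``\OPT\ faults once per phase'' argument. Instead it processes each request in two sub-steps---\OPT\ serves first, then \ARC---and proves the per-request inequality $C_{arc}(\sigma_j)+\Delta\Phi \le 4N\cdot C_{opt}(\sigma_j)$ directly: in the \OPT\ sub-step $\Delta\Phi \le 4N$ on a fault and $0$ otherwise; in the \ARC\ sub-step one shows $C_{arc}+\Delta\Phi \le 0$ in every case. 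Your second paragraph already describes this mechanism correctly (``$\Phi$ drops by at least the actual cost unless \OPT\ itself faults''), so the fix is simply to discard the phase-forcing sentence and commit to the per-request amortized inequality throughout. The phase decomposition in the paper is used only to isolate the warm-up period before the cache is full (Lemma~\ref{lem:0}), not to charge \ARC\ faults against \OPT\ faults phase by phase.
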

We say that the cache is full if $t = N$ and either $t_1 + b_1 = N$ or
$t_2 + b_2 \ge N$. We will prove the above theorem by proving the
following inequality for any request $\sigma$ that is requested after
the cache is full:
\begin{eqnarray} 
  \label{eq:2arc}
  C_A ( \sigma ) + \Delta \Phi \leq 4N \cdot C_O (
  \sigma ) + 2N,
\end{eqnarray}
\noindent where $\Delta X$ represents the change in any quantity
$X$. Summing up the above inequality for all requests 
would prove the theorem as long as the number of faults 
prior to the cache becoming full is bounded by the additive term
$2N$. 

We make the following useful observation about a full cache.

\begin{lemma}
  \label{lem:0}
  When the request sequence requests the $N$-th distinct page, we have
  $t=N$, and this remains an invariant from that point onward.  No
  items are discarded from the cache (main or history) until either
  $t_1 + b_1 = N$ or $\ell_1 + \ell_2 = 2N$. By the time the request
  sequence requests the $2N$-th distinct page, we have either
  $t_1 + b_1 = N$ or $\ell_1 + \ell_2 = 2N$.
\end{lemma}

\begin{proof} 
  Once the request sequence requests the $N$-th distinct page, it is
  obvious that we will have $t=N$, since until then, no item is
  evicted from $T_1 \cup T_2 \cup B_1 \cup B_2$. (Note that {\sc
    Replace} only moves items from the main part to the history, i.e.,
  from $T_1 \cup T_2 $ to $B_1 \cup B_2$.) Also, until then, $p$ does
  not change. From that point forward, the algorithm never evicts any
  item from $T_1 \cup T_2$ without replacing it with some other
  item. Thus, $t=N$ is an invariant once it is satisfied. The history
  remains empty until the main cache is filled, i.e., $t=N$.

  From the pseudocode it is clear that items are discarded from the
  cache in statements 14, 17, and 21; no discards happen from the
  cache until either $t_1 + b_1 = N$ (statement 12) or
  $\ell_1 + \ell_2 = 2N$ (statement 20). If $\ell_1 + \ell_2 = 2N$ is
  reached, since $t_1 + b_1 \le N$, we are guaranteed that
  $t_2 + b_2 \ge N$ and $b_1 + b_2 = N$, both of which will remain
  true from that point onward. Thus, by the time the $2N$-th distinct
  page is requested, we have reached either $t_1 + b_1 = N$ or
  $\ell_1 + \ell_2 = 2N$.
\end{proof}

We assume that request $\sigma$ is processed in two distinct steps:
first when \OPT\ services the page request and, next when \ARC\
services the request. We will show that inequality (\ref{eq:2arc}) is
satisfied for each of the two steps.

\subsubsection*{Step 1: \OPT\ {\em services request $\sigma$}}

Since only \OPT\ acts in this step, $C_A = 0$, and the contents of
\ARC's cache does not change. There are two possible cases: either
\OPT\ faults on $\sigma$ or it does not. Assume that page $x$ is
requested on request $\sigma$.

If \OPT\ does not fault on this request, then $C_O = 0$.  Since the
contents of the cache maintained by \OPT\ does not change, and neither
do the lists $L_1$ and $L_2$, we have $\Delta \Phi= 0$, and
$C_A ( \sigma) + \Delta \Phi \leq 4N \cdot C_O ( \sigma ) \leq 0$.

If \OPT\ faults on request $\sigma$, then $C_O = 1$. The contents
of the cache maintained by \OPT\ does change, which will affect the
potential function. 
\OPT\ will bring in page $x$ into its cache. Assume that it evicts
page $y$ from its cache. The entry of page $x$ into \OPT's cache can
only decrease the potential function. The exit of page $y$ from \OPT's
cache can increase the potential function by at most $4N$. The reason
is as follows. 
Since the sum of $b_1',b_2',t_1',t_2'$ cannot exceed the size of
\OPT's cache, we have $0 \le b_1'+ t_1'+ b_2'+ t_2'\le N$.  Since
$b_1'+ 2t_1'+ 3b_2'+ 4t_2' \le 4(b_1'+ t_1'+ b_2'+ t_2')$, the left
hand side cannot decrease by more than $4N$.  
Thus,
$C_A ( \sigma ) + \Delta \Phi_1 \leq 4N$, proving inequality
(\ref{eq:2arc}).

\subsubsection*{Step 2: \ARC\ {\em services request $\sigma$}}

There are four possible cases, which correspond to the four cases in
\ARC's replacement algorithm. Case 1 deals with the case when \ARC\
finds the page in its cache. The other three cases assume that \ARC\
faults on this request because the item is not in $T_1 \cup T_2$. Cases
2 and 3 assume that the missing page is found recorded in the history
in lists $B_1$ and $B_2$, respectively. Case 4 assumes that the
missing page is not recorded in history.

\smallskip\noindent
{\bf Case I}: \ARC\ has a page hit.

\noindent Clearly, $C_A = 0$. 
We consider several subcases. In each case, the requested page will be
moved to $MRU(T_2)$ while shifting other pages in $T_2$ down.
\begin{description}
\item[Case I.1] If the requested page is in $T_1'$, the move of this page from
  $T_1'$ to $T_2'$ implies $\Delta t_1' = -1; \Delta t_2' = +1$ and
  $\Delta \Phi = -(2 \cdot \Delta t_1' + 4 \cdot\Delta t_2') = -2$.
\item[Case I.2] If the requested page is in $T_2'$, the move of this page to
  $MRU(T_2)$ does not change the set of items in $T_2'$. Thus, $\Delta
  t_1' = \Delta t_2' = 0$ and $\Delta \Phi = 0$.
\item[Case I.3] If the requested page is in $T_1 - T_1'$, then
  $\Delta t_1' = 0; \Delta t_2' = +1$ and $\Delta \Phi = -4$. One
  subtle point to note is that moving $x$ from $T_1-T_1'$ could
  potentially increase $t_1'$ if the following conditions are met: $x$
  is located just below $T_1'$ in $T_1$, it is not in \OPT's cache,
  and the items in $T_1$ immediately below it are in \OPT. However,
  $x$ is already in \OPT's cache and there must be some item above it
  in $T_1$ that is not in \OPT.
\item[Case I.4] If the requested page is in $T_2 - T_2'$, then
  $\Delta t_2' = +1$ and $\Delta \Phi = -4$. The subtle point
  mentioned in Case I.3 also applies here. 
\end{description}

Next we will analyze the three cases when the requested page is not in
\ARC's cache. Since $C_A = 1$, the change in potential must be at
most -1 in order for inequality (\ref{eq:2arc}) to be satisfied. We make
the following useful observations in the form of lemmas.

\begin{lemma}
  \label{lem:a} 
  If \ARC\ has a miss and if the page is not in \ARC's history, we
  have $\ell' = t_1' + t_2' + b_1' + b_2' < N$. Consequently, we also
  have $\ell_1'<N$ and $\ell_2'<N$.
\end{lemma}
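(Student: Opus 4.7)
\medskip\noindent\textbf{Proof plan for Lemma~\ref{lem:a}.}
The plan is to count, on one hand, how many pages of $L_1'\cup L_2'$ must sit inside \OPT's cache, and on the other hand, to exhibit an additional page that is forced to be in \OPT\ but is disjoint from $L_1'\cup L_2'$. Bounding this total by the cache capacity $N$ will yield $\ell_1'+\ell_2'<N$.

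First I would unpack the definitions. By the construction of $L_1=T_1\cup B_1$ and $L_2=T_2\cup B_2$, the lists $L_1$ and $L_2$ are disjoint (a page is either in the ``accessed once'' chain or the ``accessed more than once'' chain, never both). Hence their $TOP$s $L_1'$ and $L_2'$ are also disjoint, and in particular $\ell_1'+\ell_2'=t_1'+t_2'+b_1'+b_2'$, matching the quantity in the lemma statement. By the definition of $TOP$, every page of $L_1'\cup L_2'$ currently resides in \OPT's cache, contributing $\ell_1'+\ell_2'$ distinct pages to that cache.

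Next I would use the hypothesis. Because we are in Step~2 of the analysis, \OPT\ has already serviced $\sigma_j$ in Step~1, so $\sigma_j$ sits in \OPT's cache. The assumptions say that $\sigma_j\notin T_1\cup T_2$ (\ARC\ missed) and $\sigma_j\notin B_1\cup B_2$ (not in history), so $\sigma_j\notin L_1\cup L_2$ and therefore $\sigma_j\notin L_1'\cup L_2'$. Thus $\sigma_j$ is a distinct extra page present in \OPT, yielding $|\OPT\text{'s cache}|\geq \ell_1'+\ell_2'+1$. Since \OPT's cache has capacity $N$, we conclude $\ell_1'+\ell_2'<N$, which is the stated inequality $\ell'<N$. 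The two consequences $\ell_1'<N$ and $\ell_2'<N$ then follow immediately because $\ell_1'$ and $\ell_2'$ are nonnegative.

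There is no real obstacle here; the lemma is essentially a pigeonhole argument on \OPT's cache. The only subtlety worth being careful about is the bookkeeping convention that $\sigma_j$ is already in \OPT\ at the time we evaluate $b_1',b_2',t_1',t_2'$ in Step~2, which is exactly what justifies treating $\sigma_j$ as the ``extra'' page beyond $L_1'\cup L_2'$.
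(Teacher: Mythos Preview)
Your proposal is correct and follows essentially the same pigeonhole argument as the paper: since \OPT\ has already served $\sigma_j$ in Step~1, that page lies in \OPT's cache but (by the two hypotheses) outside $L_1\cup L_2$, so \OPT\ holds at least $\ell_1'+\ell_2'+1$ pages, forcing $\ell'<N$. Your write-up is in fact more careful than the paper's terse version, making explicit the disjointness of $L_1'$ and $L_2'$ and the role of the ``not in history'' hypothesis in ensuring $\sigma_j\notin L_1'\cup L_2'$.
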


\begin{proof} 
  Since \OPT\ has just finished serving the request, the page is
  present in the cache maintained by \OPT\ just before \ARC\ starts to
  service the request. If \ARC\ has a miss, there is at least one page
  in the cache maintained by \OPT\ that is not present in the cache
  maintained by \ARC, implying that $l' < N$. By definition,
  $\ell' = \ell_1' + \ell_2' = t_1' + t_2' + b_1' + b_2'$. Thus, the
  lemma holds.
\end{proof}

\begin{lemma}
  \label{lem:b}
  A call to procedure {\sc Replace} either causes an element to be
  moved from $T_1$ to $B_1$ or from $T_2$ to $B_2$. In either case,
  the change in potential due to {\sc Replace}, denoted by
  $\Delta\Phi_R$, has an upper bound of 1.
\end{lemma}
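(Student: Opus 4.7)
The plan is to take the dichotomy already stated in the lemma (REPLACE either demotes $x := LRU(T_1)$ to $MRU(B_1)$ or demotes $x := LRU(T_2)$ to $MRU(B_2)$, as read off the ARC pseudocode) and directly track how each action moves the coordinates of $\Phi$. In either branch $p$ is untouched, OPT is idle, and exactly one of the two pairs $(T_1',B_1')$ or $(T_2',B_2')$ can change.

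The key structural observation I would lean on is that the ordered list $L_i = T_i \cdot B_i$ is literally invariant as a sequence under the demotion: $x$ sits at position $t_i$ of $L_i$ before the step (as $LRU(T_i)$) and remains at position $t_i$ after the step (as $MRU(B_i)$, once $|T_i|$ has dropped by one). Since OPT's cache is fixed during REPLACE, the TOP prefix $L_i'$ is therefore invariant as a set; only the internal split between $T_i'$ and $B_i'$ can move.

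From there I would split on whether $x \in L_i'$. If $\ell_i' \ge t_i$ the split crosses $x$, giving $\Delta t_i' = -1$ and $\Delta b_i' = +1$; substituting into the bracket $[(b_1' - t) + 2(t_1' - t) + 3(b_2' - t) + 4(t_2' - t)]$ yields a change of $(+1) + 2(-1) = -1$ in the $T_1$-branch and $3(+1) + 4(-1) = -1$ in the $T_2$-branch, so in either branch $\Delta \Phi_R = +1$ (recall $\Phi = p - [\cdot]$). If instead $\ell_i' < t_i$, then $x$ lies strictly below $L_i'$ and all four primed sizes are fixed, giving $\Delta \Phi_R = 0$. Taking the worst case, $\Delta \Phi_R \le 1$.

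I do not expect a substantive obstacle here; the only fiddly points are verifying the invariance of $L_i$ as an ordered sequence (a one-line bookkeeping check) and handling the boundary case $\ell_i' = t_i$, where $x$ sits exactly at the split. For the $t$-dependent offsets in the bracket I would rely on the convention -- consistent with Lemma \ref{lem:0} and the fact that REPLACE in ARC is always paired with an immediate insertion of the requested page -- that $t$ is held fixed at $N$, so the $-t$ terms contribute nothing to $\Delta \Phi_R$.
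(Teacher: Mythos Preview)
Your proposal is correct and matches the paper's argument: both split on whether the demoted element $x = LRU(T_i)$ lies in $T_i'$ (equivalently, whether $\ell_i' \ge t_i$), obtaining $\Delta\Phi_R = +1$ in that branch via $(\Delta t_i', \Delta b_i') = (-1, +1)$ and $\Delta\Phi_R = 0$ otherwise. Your observation that the ordered list $L_i$ is literally unchanged by the demotion is a clean way to justify that $\ell_i'$ is preserved, a fact the paper uses implicitly when it asserts that the moved element ``becomes part of $B_i'$'' and that ``$B_i'$ remains empty.''
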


\begin{proof}
  Procedure {\sc Replace} is only called when \ARC\ has a page
  miss. Clearly, it causes an item to be moved from $T_1$ to $B_1$ or
  from $T_2$ to $B_2$. If that item is in $T_1'$ (or $T_2'$), then
  $T_1 = T_1'$ ($T_2 = T_2'$, resp.) and the moved item becomes part
  of $B_1'$ ($B_2'$, resp.). Because the coefficients of $b_1'$ and
  $t_1'$ ($b_2'$ and $t_2'$, resp.) differ by 1, we have
  $\Delta\Phi_R = +1$. On the other hand, if that element is in
  $T_1 - T_1'$ ($T_2 - T_2'$, resp.), then $B_1'$ ($B_2'$, resp.) was
  empty before the move and remains empty after the move, and thus,
  $\Delta\Phi_R = 0$.
\end{proof}

\begin{lemma}
  \label{lem:d}
  On an \ARC\ miss after phase $P(0)$, if $T_1 = T_1'$ then the {\sc
    Replace} step will not move a page from $T_2'$ to $B_2$. On the
  other hand, if $T_2 = T_2'$ then {\sc Replace} will not move a page
  from $T_1'$ to $B_1$.
\end{lemma}

\begin{proof}
  In an attempt to prove by contradiction, let us assume that
  $T_1 = T_1'$ and $T_2 = T_2'$ are simultaneously true and \ARC\ has
  a miss. By Lemma \ref{lem:0}, we know that after phase, we have
  $t = t_1 + t_2 = N$, which by our assumption means that
  $t_1' + t_2' = N$; this is impossible by Lemma \ref{lem:a}.
  Thus, if $T_1 = T_1'$, then $T_2 \ne T_2'$.  Consequently, if
  $LRU(T_2)$ is moved to $B_2$, this item cannot be from $T_2'$. By a
  symmetric argument, if $T_2' = T_2$, then $T_1 \ne T_1'$, and
  $LRU(T_1)$ is not in $T_1'$.
\end{proof}

\smallskip\noindent
{\bf Case II}: \ARC\ has a miss and the missing page is in $B_1$

\noindent 
Note that in this case the value of $p$ will change by $+1$, unless its value
equals $N$, in which case it has no change. Thus $\Delta p \le 1$. 

If the missing item is in $B_1'$, then $\Delta b_1' = -1$ and $\Delta
t_2' = +1$. Adding the change due to {\sc Replace}, we get 
\begin{eqnarray*}
  \Delta \Phi & \le & 1 - (\Delta b_1' + 4 \cdot \Delta t_2') + \Delta\Phi_R\\
  & \le & -1
\end{eqnarray*}
If the missing item is in $B_1 - B_1'$, then we have $\Delta t_2' = 1$
and $\Delta b_1' = 0$. Thus, we have
\begin{eqnarray*}
  \Delta \Phi & \le & 1 - (\Delta b_1' + 4 \cdot \Delta t_2') + \Delta\Phi_R\\
  & \le & -2
\end{eqnarray*}

\smallskip\noindent
{\bf Case III}: \ARC\ has a miss and the missing page is in $B_2$.

\noindent
Note that in this case the value of $p$ will change by -1, if its value
was positive, otherwise it has no change. Thus $\Delta p \le 0$. 

If the requested item is in $B_2'$, then $\Delta t_2' = 1$, and
$\Delta b_2' = -1$. Thus, we have
\begin{eqnarray*}
  \Delta \Phi & = & \Delta p - (3 \cdot \Delta b_2' + 4 \cdot \Delta t_2') +
  \Delta \Phi_R \\
  & \le & 0
\end{eqnarray*}

But this is not good enough since we need the potential change to be
at most -1. When $\Delta p = -1$, then we get the required inequality
$\Delta\Phi \le -1$. Clearly, the difficulty is when $\Delta p = 0$,
which happens when $p=0$.  Since the missing item is from $b_2'$, it
implies that $B_2'$ is non-empty and $T_2' = T_2$. By Lemma
\ref{lem:d} above, there must be at least one item in $T_1 - T_1'$,
which means that means that $t_1 > 0$. As per the algorithm, since
$T_1$ is non-empty and $p = 0$, we are guaranteed to replace
$LRU(T_1)$, and not an element from $T_1'$. Therefore, {\sc Replace}
will leave $t_1'$ and $b_1'$ unchanged, implying that
$\Delta \Phi_R = 0$. Thus, we have
\begin{eqnarray*}
  \Delta \Phi & = & \Delta p - (3 \cdot \Delta b_2' + 4 \cdot \Delta t_2') +
  \Delta \Phi_R \\
  & \le & -1
\end{eqnarray*}
 
\noindent If the requested item is from $B_2 - B_2'$, then 
$\Delta t_2' = 1$, and $\Delta b_2' = 0$. Thus, we have
\begin{eqnarray*}
  \Delta \Phi & \leq & \Delta p - (4 \cdot \Delta t_2')
  + \Delta \Phi_R \\
  & \leq & -3
\end{eqnarray*}

\smallskip\noindent
{\bf Case IV}: \ARC\ has a miss and the missing page is not in $B_1
\cup B_2$

\noindent We consider two cases. First, when $\ell_1 = N$, \ARC\ will
evict the $LRU(L_1)$. Since by Lemma \ref{lem:a}, $\ell_1' < N$, we
know that for this case, $b_1'$ remains unchanged at 0 and
$\Delta t_1' = +1$. Thus, 
\begin{eqnarray*}
  \Delta \Phi & \leq & - (2 \cdot \Delta t_1') + \Delta \Phi_R \\
  & \leq & -1
\end{eqnarray*}

On the other hand, if $\ell_1 < N$, then \ARC\ will evict the
$LRU(L_2)$. Again, if the cache is full (i.e., $t_1 + t_2 = N$ and
$\ell_1 + \ell_2 = 2N$), then we know that $\ell_2 > N$, which means
that $L_2' \ne L_2$ and $LRU(L_2)$ is not in $L_2'$. Thus, deletion of
$LRU(L_2) = LRU(B_2)$ will not affect $b_2'$ or any of the other
quantities in the potential function. Then comes the {\sc Replace}
step, for which a bound has been proved earlier. Finally, a new item
is brought in and placed in $MRU(T_1)$. Thus $\Delta t_1' \le
1$. Putting it all together, we have
\begin{eqnarray*}
  \Delta \Phi & \leq & - (2 \cdot \Delta t_1') + \Delta \Phi_R \\
  & \leq & -1
\end{eqnarray*}

\paragraph*{Wrapping up the proof of Theorem \ref{thm:1}}

Tying it all up, we have shown that inequality (\ref{eq:2arc}) holds
for every request made after the cache is full, i.e., 
\begin{eqnarray*}
    C_A ( \sigma) + \Delta \Phi \leq 4N \cdot C_O ( \sigma).
\end{eqnarray*}
If we assume that the caches started empty, then the initial potential
is 0, while the final potential can be at most $4N$. Thus, we have 
\begin{eqnarray*}
    C_A ( \sigma) \leq 4N \cdot C_O ( \sigma) + 4N, 
\end{eqnarray*}
thus proving Theorem \ref{thm:1}. \qed

\subsection{Analyzing the Competitiveness of CAR}

Next, we analyze the competitiveness of \CAR. The main result of this
section is the following:

\begin{theorem}
  \label{thm:car1}
  Algorithm \CAR\ is $18N$-competitive.
\end{theorem}

\begin{proof}
  Let $P_X[q]$ be the position of page $q$ in an arbitrary ordered
  sequence of pages $X$. When the set is obvious, we will drop the
  subscript and denote $P_X[q]$ simply by $P[q]$. The set of history
  pages $B_1$ and $B_2$ will be treated as an ordered sequence of
  pages ordered from its LRU position to its MRU position. The set of
  main pages $T_1^0$ (resp., $T_2^0$, $T_1^1$, and $T_2^1$) will be
  treated as an ordered sequence of unmarked (resp., unmarked, marked,
  and marked) pages in $T_1$ (resp, $T_2$, $T_1$, and $T_2$) ordered
  from head to tail. Let $\OPT$ and $\CAR$ be the set of (main and
  history) pages stored in the caches for algorithms \OPT\ and \CAR\
  respectively. Let 
  $D = (T_1 \cup T_2 \cup B_1 \cup B_2) \setminus \OPT$. Thus $D$
  consists of pages in $\CAR$ but not in $\OPT$.
  %
  
  We associate each page with a rank value $R[q]$, which is defined as
  follows:
  \begin{empheq}[box={\mybluebox[5pt]}]{equation}
    R[q] = 
    \begin{cases} 
      P_{B_1}[q]  & \text{if } q \in B_1  \\
      P_{B_2}[q]  & \text{if } q \in B_2  \\
      2 P_{T_1^0}[q] + b_1  &\text{if } q \in T_1^0 \\
      2 P_{T_2^0}[q] + b_2  &\text{if } q \in T_2^0 \\
      3 N + 2 P_{T_1^1}[q] + b_1 &\text{if } q \in T_1^1\\
      3 N + 2 P_{T_2^1}[q] + b_2 &\text{if } q \in T_2^1
    \end{cases}
    \label{eq:R}
  \end{empheq}
  Finally, we define the potential function as follows:
  \begin{empheq}[box={\mybluebox[5pt]}]{equation}
    \Phi = \big(\frac{1}{N-N_O+1}\big) (p +2(b_1+t_1) 
    + 3\sum_{q \in D} R[q])
    \label{eq:Phi}
  \end{empheq}
  The initial value of $\Phi$ is 0. If the following inequality
  (\ref{eq:2}) is true for any request $\sigma$, where $\Delta \Phi$
  is the change in potential caused by serving the request, then when
  summed over all requests, it proves Theorem \ref{thm:car1}.
\begin{equation}
  C_{\CAR}(\sigma) + \Delta \Phi \leq \big(\frac{18N}{N-N_O+1}\big) C_{\OPT}(\sigma). 
  \label{eq:2}
\end{equation}
As before, we assume that request $\sigma$ is processed in two
distinct steps: first when \OPT \: serves and, next when \CAR \:
serves. We will show that inequality (\ref{eq:2}) is satisfied for
each of the two steps.

\subsubsection*{Step 1: \OPT \: serves request $\sigma$}

Since only \OPT \: acts in this step, $C_{\CAR} = 0$, and
$T_1 \cup T_2$ does not change. There are two possible cases: either
\OPT \: faults on $\sigma$ or it does not. If \OPT \: does not fault
on this request, then it is easy to see that $C_{\OPT} = 0$ and
$\Delta \Phi = 0$, thus satisfying inequality (\ref{eq:2}).

If \OPT \: faults on request $\sigma$, then $C_{\OPT} = 1$ and some
page, $q$, is evicted from the cache maintained by \OPT. If $q$ is
maintained by \CAR \: then it follows that $q$ will belong to $D$
after this step and thus its rank will contribute to the potential
function, which will increase by three times the rank of $q$. The
maximal positive change in potential will occur when $q$ is the marked
head page in $T_2$. In this case the rank of $q$ is given by:
$R[q] =3N + 2P[q] + b_2$. The maximal possible values for each of
the terms $P[q]$ and $b_2$ will be $N$, hence the maximum possible
rank of $q$ will be $3N + 2N + N = 6N$. Therefore resulting
potential change is at most $3(6N) = 18N$. 

\subsubsection*{Step 2: \CAR \: serves request $\sigma$}

We break down the analysis into four cases. Case 2.1 deals with the case
when \CAR \: finds the page in its cache. The other three cases assume
that \CAR \: faults on this request because the item is not in
$T_1 \cup T_2$. Cases 2.2 and 2.3 assume that the missing page is found
recorded in the history in lists $B_1$ and $B_2$, respectively. Case 2.4
assumes that the missing page is not recorded in history.

\paragraph*{Case 2.1: \CAR \: has a page hit}

Clearly, the page was found in $T_1 \cup T_2$, and $C_{\CAR} = 0$. We
consider the change of each of terms in the potential function
individually. 

\begin{enumerate}
\item As per the algorithm, $p$ can only change when the page is found
  in history. (See lines 14 through 20 of \CAR$(x)$.) Since the page
  is not found in \CAR's history, $\Delta p = 0$.
\item Neither the cache nor the history lists maintained by \CAR \:
  will change. Thus, the contribution to the second term in $\Phi$,
  i.e., $2(b_1+t_1)$ does not change.
\item Since \OPT \: has already served the page, the page is in \OPT's
  cache. Therefore, even if the page gets marked during this hit, its
  rank value does not change. Thus, the contribution to the last term in $\Phi$,
  also remains unchanged.
\end{enumerate}
We, therefore, conclude that $ \Delta \Phi = 0$, satisfying inequality
(\ref{eq:2}).

Next we will analyze the three cases when the requested page is not in
\CAR's cache. Since $C_{\CAR}= 1$, the change in potential must be at
most $-1$ in each case in order for inequality (\ref{eq:2}) to be
satisfied. Before tackling the three cases, the following lemmas
(\ref{lem:car2.3} and \ref{lem:car2.3r}) are useful for understanding
the potential change caused by the last term in the potential
function, i.e., $\sum_{q \in D} R[q]$. It is worth pointing out that a
call to {\sc Replace} moves either an item from $T_1$ to $B_1$ or from
$T_2$ to $B_2$, which is exactly the premise of Lemma \ref{lem:car2.3}
below. 



\begin{lemma} 
  \label{lem:car2.3}
  When a page is moved from $T_1$ to $B_1$ (or from $T_2$ to $B_2$)
  its rank decreases by at least $1$.
\end{lemma}

\begin{proof} 
  Let $q$ be any page in $T_1$. In order for $q$ to be moved from
  $T_1$ to $B_1$ it must have been unmarked and located at the head of
  $T_1$. Since $P_{T_1}[q] = 1$, the rank of $q$ prior to the move
  must have been $R[q] = 2 P_{T_1}[q] + b_1 = b_1 + 2$, where $b_1$
  is the size of $B_1$ prior to moving $q$.

  After $q$ is moved to the MRU position of $B_1$,
  $R[q] = P_{B_1}[q] = b_1 + 1$. Thus its rank decreased by 1. The
  arguments for the move from $T_2$ to $B_2$ are identical with the
  appropriate changes in subscripts. 
\end{proof}

\begin{lemma} 
  \label{lem:car2.3r}
  When \CAR \: has a page miss, the term $\sum_{q \in D} R[q]$ in the
  potential function $\Phi$ cannot increase.
\end{lemma}

\begin{proof} 
  We examine the rank change based on the original location of the
  page(s) whose ranks changed and in each case show that the rank
  change is never positive. Wherever appropriate we have provided
  references to line numbers in Pseudocode \CAR$(x)$ from Appendix. 

  \begin{description}
  \item[Case A: $q\in B_1 \cup B_2$] {\quad} \\
    The rank of $q\in B_1$, which is simply its position in $B_1$,
    can change in one of three different ways. 
    \begin{enumerate} \topsep=0pt
    \item Some page $x$ less recently used than $q$ (i.e.,
      $P_{B_1}[x] < P_{B_1}[q]$) was evicted (Line 7). In this case,
      it is clear that $P_{B_1}[q]$ decreases by at least 1.
    \item The page $q$ is the requested page and is moved to $T_2$
      (Line 16).  In this case, $q\in\OPT$ and hence its rank cannot
      affect the potential function.
    \item Some page $x$ is added to MRU of $B_1$ (Line 27).  Since
      pages are ordered from LRU to MRU, the added page cannot affect
      the rank of $q$.
    \end{enumerate}
    Using identical arguments for $q\in B_2$, we conclude that a miss
    will not increase the rank of any page in $B_1\cup B_2$.

  \item[Case B: $q\in T_1^0 \cup T_2^0$] {\quad} \\
    The rank of page $q \in T_1^0$, defined as
    $R[q] = 2P_{T_1^0}[q] + b_1$, may be affected in four different
    ways.
    \begin{enumerate} \topsep=0pt
    \item If page $q$ is the head of $T_1$ and gets moved to $B_1$
      (Line 27), by lemma \ref{lem:car2.3}, the change in rank of $q$
      is at most $-1$. 
    \item If an unmarked page $x$ is added to the tail of $T_1$ (Line
      13), then since the ordering is from head to tail, it does not
      affect the position of page $q$. Since there was no change in
      $b_1$, it is clear that the change in $R[q]$ is 0.
    \item If the unmarked page $x \ne q$ at the head of $T_1$ is
      marked and moved to tail of $T_2$ (Line 29), then $P[q]$
      decreases by at least 1. Since the content of $B_1$ is
      unchanged, the change in $R[q] = 2P[q] + b_1$ is at most -2.
    \item If the unmarked page $x \ne q$ at the head of $T_1$ is
      moved to $B_1$ (Line 29), then $P[q]$ decreases by at least 1,
      and $b_1$ increases by $1$. Hence the change in
      $R[q] = 2P[q] + b_1$ is at most -1.
    \end{enumerate}
    The arguments are identical for $q\in T_2^0$. In each case, we have
    shown that a miss will not increase the rank of any page in
    $T_1^0 \cup T_2^0$.

  \item[Case C: $q\in T_1^1$] {\quad} \\
    The rank of page $q\in T_1^1$, defined as $R[q] = 3N +
    2P_{T_1^1}[q] + b_1$, may be affected in four different ways.
    \begin{enumerate} \topsep=0pt
    \item If an unmarked page $x$ is added to the tail of $T_1$ (Line
      13), then since the ordering is from head to tail, it does not
      affect the position of page $q$. Since there was no change in
      $b_1$, it is clear that the change in $R[q]$ is 0.
    \item If the unmarked page $x \ne q$ at the head of $T_1$ is
      marked and moved to tail of $T_2$ (Line 29), then $P[q]$
      decreases by at least 1. Since $B_1$ is unchanged, the change in
      $R[q] = 3N + 2P[q] + b_1$ is at most -2. 
    \item If the unmarked page $x \ne q$ at the head of $T_1$ is
      moved to $B_1$ (Line 29), then $P[q]$ decreases by at least 1,
      and $b_1$ increases by $1$. Hence the change in
      $R[q] = 3N + 2P[q] + b_1$ is at most -1.
    \item Next, we consider the case when the marked page $q$ is the
      head of $T_1$ and gets unmarked and moved to $T_2$ (Line 29). 
      Prior to the move, the rank of $q$ is given by
      $R[q] = 3N + 2P_{T_1^1}[q] + b_1$. Since $B_1$ could be empty, we know
      that $R[q] \geq 3N + 2$. After page $q$ is unmarked and moved
      to $T_2$, its rank is given by $R[q] = 2P_{T_2^0}[q] + b_2$.
      Since $P[q] \le N$ and $b_2 \le N$, we know that the new $R[q]
      \le 3N$. Thus, the rank of page $q$ does not increase. 
    \end{enumerate}
    In each case, we have shown that a miss will not increase the
    rank of any page in $T_1^1$.

  \item[Case D: $q\in T_2^1$] {\quad} \\
    The rank of page $q\in T_2^1$, defined as $R[q] = 3N +
    2P_{T_2^1}[q] + b_2$, may be affected in four different ways.
    \begin{enumerate}
    \item If an unmarked page $x$ is added to the tail of $T_2$ (Lines
      16, 19, or 29), and if $b_2$ does not change, it is once again
      clear that the change in $R[q]$ is 0.    
    \item If a marked page $x \ne q$ at the head of $T_2$ gets unmarked and
      moved to the tail of $T_2$ (Line 36), the position of $q$ will
      decrease by 1 and there is no change in $b_2$. Thus $R[q]$
      changes by at most -2. 
    \item If an unmarked page $x$ at the head of $T_2$ is moved to
      $B_2$ (Line 34), $P[q]$ decreases by 1 and $b_2$ increases by
      1. Thus $R[q]$ changes by at most -1.
    \item Finally, we consider the case when the marked page $q$ is
      the head of $T_2$ and gets unmarked and moved to the tail of
      $T_2$ (Line 36). Prior to the move, the rank of $q$ is given by
      $R[q] = 3N + 2P_{T_2^1}[q] + b_2$. Even if $B_2$ is empty, we know
      that $R[q] \geq 3N + 2$. After page $q$ is unmarked and moved
      to $T_2$, its rank is given by $R[q] = 2P_{T_2^0}[q] + b_2$.
      Since $P[q] \le N$ and $b_2 \le N$, we know that the new $R[q]
      \le 3N$. Thus, the rank of page $q$ does not increase. 
    \end{enumerate}
    In each case, we have shown that a miss will not increase the
    rank of any page in $T_2^1$. 
  \end{description}
  The four cases (A through D) together complete the proof of Lemma
  \ref{lem:car2.3r}.
\end{proof}

\noindent We continue with the remaining cases for the proof of Theorem
\ref{thm:car1}. 
\paragraph*{Case 2.2: \CAR \: has a page miss and the missing page
  is in $B_1$} 
We consider the change in the potential function (defined in
Eq. \ref{eq:Phi}) by analyzing each of its three terms.
\begin{enumerate} \topsep=0pt\itemsep=0pt
\item Value of $p$ increases by 1, except when it is equal to 
  $N$, in which case it remains unchanged. (See Line 15.) Thus, the
  first term increases by at most 1. 
\item 
  The call to {\sc Replace} has no effect on the value of
  $(t_1 + b_1)$ because an item is moved either from $T_1$ to $B_1$ or
  from $T_2$ to $B_2$. Since the requested page in $B_1$ is moved to
  $T_2$, $(t_1 + b_1)$ decreases by 1.
\item By Lemma \ref{lem:car2.3r}, we already know that the last term
  increases by at most 0. 
\end{enumerate}
Since $p$ increases by at most 1 and the term $2(t_1 + b_1)$ decreases
by at least 2, the total change in the potential function, is at most
-1.

\paragraph*{Case 2.3: \CAR \: has a page miss and the missing page
  is in $B_2$}

When the missing page is in $B_2$, \CAR\ makes a call to {\sc Replace}
(Line 5) and then executes Lines 18-19. Thus, $p$ is decremented
except if it is already equal to 0.
We consider two subcases: $\Delta p < 0$ and $\Delta p = 0$.
\paragraph*{\underline{$\Delta p < 0$}: }
As in Case 2.2, the call to {\sc Replace} has no effect on
$(t_1 + b_1)$. Since, Lines 18-19 do not affect $T_1\cup B_1$, the
second term does not change. By Lemma \ref{lem:car2.3r}, we know that
the last term increases by at most 0.
Since $\Delta p \le -1$, the total change in the potential function,
$\Delta p + \Delta 2(t_1 + b_1)$ is at most -1.

\paragraph*{\underline{$\Delta p = 0$}: }
Unlike the subcase above when $p$ decreases by 1, the change in $p$
cannot guarantee the required reduction in the potential. We therefore
need a tighter argument. 
We know that there is a call to {\sc Replace}. Three cases
arise and are discussed below. 
\begin{itemize} \topsep=0pt\itemsep=0pt
\item If $T_1$ is empty, then $T_2$ must have $N$ pages, at least one
  of which must be in $D$. Also, {\sc Replace} must act on $T_2$,
  eventually evicting an unmarked page from head of $T_2$, causing the
  rank of any page from $T_2 \setminus \OPT$ to decrease by 1.
\item If $T_1$ is not empty and has at least one page from $D$, then
  the condition in Line 24 passes and {\sc Replace} must act on $T_1$,
  eventually evicting an unmarked page from head of $T_1$, causing the
  rank of at least one page from $T_1 \setminus \OPT$ to decrease by 1.
\item Finally, if $T_1$ is not empty and all its pages are in \OPT,
  then $T_2$ must have a page $q \in D$. Since the requested page $x$
  was found in $B_2$ and is moved to the tail of $T_2$, even though
  the position of $q$ in $T_2$ does not change, $b_2$ decreased by 1
  and consequently the rank of $q$ decreases by 1.
\end{itemize}
Thus, in each case, even though neither $p$ nor the quantity
$(t_1 + b_1)$ changed, the third term involving ranks, and
consequently, the potential function decreased by at least 3.

The following two lemmas are useful for Case 2.4, when the missing
page is not in $T_1\cup T_2\cup B_1\cup B_2$.
\begin{lemma} 
  \label{lem:LeqN}
  We make two claims: 
  \begin{enumerate} \topsep=0pt\itemsep=0pt
  \item If $t_1 + b_1 = N$ and the $LRU$ page of $B_1$ is evicted from
    the cache on Line 7, then $\sum_{q \in D} R[q]$ will decrease by
    at least one.
    \item If $t_2 + b_2 > N$, and the $LRU$ page of $B_2$, is evicted from
    the cache on Line 9, then $\sum_{q \in D} R[q]$ will decrease by
    at least one.
  \end{enumerate}
\end{lemma}

\begin{proof} 
  We tacke the first claim. Assume that $y$ is the $LRU$ page of $B_1$
  that is being evicted on Line 7. Then \CAR\ must have had a page
  miss on $x \not\in B_1 \cup B_2$, and the requested page $x$ is
  added to the tail of $T_1$.
  Since $t_1 + b_1 = N$, there is at least one page
  $q \in T_1 \cup B_1$ that is not in \OPT's cache and whose rank
  contributes to the potential function. First, we assume that
  $q \in T_1\setminus\OPT$, whose rank is given by:
  $R[q] = 2 *P[q] + b_1$. For each of the three cases, we show that
  the potential function does decrease by at least 1.
  \begin{itemize} \topsep=0pt\itemsep=0pt
  \item If {\sc Replace} acts on $T_1$ and the unmarked head of $T_1$,
    different from $q$, is moved to $B_1$ then the size of $B_1$
    remains the same (because a page gets added to $B_1$ while another
    page is evicted) but the position of $q$ in $T_1$ decreases by
    one. Therefore $R[q]$ decreases by 2.
  \item If {\sc Replace} acts on $T_1$ and $q$ itself is moved to
    $B_1$ then by Lemma \ref{lem:car2.3}, $R[q]$ decreases by at least
    1.
  \item If {\sc Replace} acts on $T_2$, then we use the fact that a
    page is evicted from $B_1$, and the $b_1$ term in $R[q]$ must
    decrease by 1.
  \end{itemize}
  %
  %
  Next, we assume that $q \in B_1\setminus\OPT$. Since $LRU(B_1)$ is
  evicted, the position of the page $q$ will decrease by one. Thus
  $R[q] = P_{B_1}[q]$ must decrease by at least 1, completing the proof of the
  first claim in the lemma. 
  
  The proof of the second claim is very similar and only requires
  appropriate changes to the subscripts.
\end{proof}

Next we tackle the last case in the proof of Theorem \ref{thm:car1}.
\paragraph*{Case 2.4: \CAR \: has a page miss and the missing page
  is not in $B_1 \cup B_2$}

We assume that \CAR's cache is full (i.e., $l_1 + l_2 = 2N$).  We
consider two cases below -- first, if $l_1 = N$ and the next when $l_1
< N$. 

If $l_1 = t_1 + b_1 = N$, \CAR \: will call {\sc Replace}, evict
$LRU(B_1)$ and then add the requested page to the tail of
$T_1$. Below, we analyze the changes to the three terms in the
potential function.
\begin{itemize} \topsep=0pt\itemsep=0pt
\item Since $p$ is not affected, the first term does not change. 
\item Since a page is added to $T_1$ and a page is evicted from $B_1$,
  the net change in the second term is 0.
\item Since the conditions of Lemma \ref{lem:LeqN} apply, the total
  rank will decrease by at least 1. 
\end{itemize}
Adding up all the changes, we conclude that the potential function
decreases by at least 3. 

If $l_1 < N$, \CAR \: will call {\sc Replace}, evict $LRU(B_2)$
and then add a page to the tail of $T_1$. As above, we analyze the
changes to the three terms in the potential function.
\begin{itemize} \topsep=0pt\itemsep=0pt
\item Since $p$ is not affected, the first term does not change. 
\item A page is added to $T_1$ and a page is evicted from $B_2$ hence
  $(t_1+b_1)$ increases by 1.
\item Since $l_2 > N$, the conditions of Lemma \ref{lem:LeqN} apply,
  the total rank will decrease by at least 1.
\end{itemize}

Adding up all the changes, we conclude that the potential function
decreases by at least 1, thus completing Case 2.4.

\paragraph*{Wrapping up the proof of Theorem \ref{thm:car1}: }
Combining the four cases (2.1 through 2.4) proves that inequality
(\ref{eq:2}) is satisfied when \CAR\ serves request $\sigma$. This
completes the proof of Theorem \ref{thm:car1}, establishing that the
upper bound on the competitiveness of \CAR\ is $18N$.
\end{proof} 

\section{Conclusions and Future Work}
\label{sec:concl}

Adaptive algorithms are tremendously important in situations
where inputs are infinite online sequences and no single optimal
algorithm exists for all inputs. Thus, different portions of the input
sequence require different algorithms to provide optimal
responses. Consequently, it is incumbent upon the algorithm to sense
changes in the nature of the input sequence and adapt to these
changes. Unfortunately, these algorithms are harder to analyze. We
present the analysis of two important adaptive algorithms called \ARC\
and \CAR\ and show that they are competitive along with proving good
lower bounds on the competitiveness ratios. 

Two important open questions remain unanswered. Given that there is a
gap between the lower and upper bounds on the competitiveness ratios
of the two adaptive algorithms, \ARC\ and \CAR, what is the true
ratio? More importantly, is there an ``expected'' competitiveness
ratio for request sequences that come from real applications? The
second question would help explain why \ARC\ and \CAR\ perform better
in practice than \LRU\ and \CLOCK, respectively.  

\paragraph*{Acknowledgments} This work was partly supported by two
NSF Grants (CNS-1018262 and CNS-1563883) and the NSF Graduate Research
Fellowship (DGE-1038321). We are grateful to Kirk Pruhs for suggesting
enhancing our results with the assumption of unequal cache sizes. 

\vfill\pagebreak

\bibliography{paging}{}

\vfill\pagebreak
\section{Appendix}
\label{sec:App}

We reproduce the pseudocode for \ARC\ and \CAR\ below. 
\begin{algorithm}

{\bf Pseudocode}: \ARC($x$) \\
{\sc INPUT}: The requested page $x$ \\
{\sc INITIALIZATION}: Set $p=0$ and set lists $T_1$, $B_1$, $T_2$, and $B_2$ to empty\\
\label{alg:ARC}
\begin{algorithmic}[1] 
\smallskip
\If{($x$ is in $T_1 \cup T_2$)} 
\Comment{cache hit}  
\State Move $x$ to the top of $T_2$
\ElsIf {($x$ is in $B_1$)}
\Comment{cache history hit}  
\State {\sc Adaptation}: Update $p =\min\{p+1,N\}$
\Comment{learning rate = 1}
\State {\sc Replace}()
\Comment {make space in $T_1$ or $T_2$}
\State Fetch $x$ and move to the top of $T_2$
\ElsIf {($x$ is in $B_2$)}
\Comment{cache history hit}  
\State {\sc Adaptation}: Update: $p=\max\{p- 1,0\}$
\Comment{learning rate = 1}
\State {\sc Replace}()
\Comment {make space in $T_1$ or $T_2$}
\State Fetch $x$ and move to the top of $T_2$ 
\Else \Comment{cache and history miss}
\If {($t_1 + b_1 = N$)}
\If {($t_1 < N$)}
\State Discard LRU item in $B_1$
\State {\sc Replace}()
\Comment {make space in $T_1$ or $T_2$}
\Else
\State Discard LRU page in $T_1$ and remove from cache
\EndIf
\ElsIf {$((t_1 + b_1 < N$) and ($t_1 + t_2 + b_1 + b_2 \ge N$))}
\If {($t_1 + t_2 + b_1 + b_2 = 2N$)}
\State Discard LRU item in $B_2$
\EndIf
\State {\sc Replace}()
\Comment {make space in $T_1$ or $T_2$}
\EndIf
\State Fetch $x$ and move to the top of $T_1$ 
\EndIf
\end{algorithmic}
\hrulefill

{\sc Replace}()
\begin{algorithmic}[1]
\makeatletter\setcounter{ALG@line}{25}\makeatother
\If{(($t_1 \ge 1$) and (($x \in B_2$ and $t_1 = p$) or ($t_1 > p$)))}
\State Discard LRU page in $T_1$ and insert as MRU history item in $B_1$
\Else
\State Discard LRU page in $T_2$ and insert as MRU history item in $B_2$
\EndIf
\end{algorithmic}
\end{algorithm}

\begin{algorithm}
{\bf Pseudocode}: \CAR($x$) \\
{\sc INPUT}: The requested page $x$ \\
{\sc INITIALIZATION}: Set $p=0$ and set lists $T_1$, $B_1$, $T_2$, and $B_2$ to empty\\
\label{alg:CAR}
\begin{algorithmic}[1] 
\smallskip
\If{($x$ is in $T_1 \cup T_2$)} 
\Comment{cache hit}  
\State Mark page $x$
\Else \Comment{cache miss}
\If{$(t_1 + t_2 = N)$} 
\Comment{cache full, replace a page from cache}
\State {\sc Replace}()
\Comment {make space in $T_1$ or $T_2$}
\If {(($x \not\in B_1 \cup B_2$) and ($t_1 + b_1 = N$))}
\State Discard LRU page in $B_1$
\ElsIf {(($x \not\in B_1 \cup B_2$) and ($t_1 + t_2 + b_1 + b_2 = 2N$))}
\State Discard LRU page in $B_2$.
\EndIf
\EndIf
\If{($x \not\in B_1 \cup B_2)$}
\Comment{cache miss}
\State Insert $x$ at the tail of $T_1$; Unmark page $x$
\ElsIf{($x \in B_1$)}
\Comment{cache history hit}
\State {\sc Adaptation}: Update $p =\min\{p+1,N\}$
\Comment{learning rate = 1}
\State Move $x$ to the tail of $T_2$; Unmark page $x$
\Else
\Comment{cache history hit}
\State {\sc Adaptation}: Update: $p=\max\{p-1,0\}$
\Comment{learning rate = 1}
\State Move $x$ to the tail of $T_2$; Unmark page $x$
\EndIf
\EndIf
\end{algorithmic}
\hrulefill

{\sc Replace}()
\begin{algorithmic}[1]
\makeatletter\setcounter{ALG@line}{21}\makeatother
\State found = false
\Repeat
\If{($t_1 \ge \max\{1,p\}$)}
\If{(head page in $T_1$ is unmarked)}
\State found = true
\State Discard head page in $T_1$ and insert as MRU history item in $B_1$
\Else
\State Unmark head page in $T_1$, move page as tail page in $T_2$, and
move head of $T_1$ clockwise
\EndIf
\Else
\If{(head page in $T_2$ is unmarked)}
\State found = true
\State Discard head page in $T_2$ and insert as MRU history item in $B_2$
\Else
\State Unmark head page in $T_2$, and move head of $T_2$ clockwise
\EndIf
\EndIf
\Until{(found)}
\end{algorithmic}
\end{algorithm}

\end{document}